\documentclass{article}

\usepackage{arxiv}

\usepackage[utf8]{inputenc} 
\usepackage[T1]{fontenc}    
\usepackage{float}
\usepackage{hyperref}       
\usepackage{url}            
\usepackage{booktabs}       
\usepackage{amsfonts}       
\usepackage{nicefrac}       
\usepackage{microtype}      
\usepackage{lipsum}

\usepackage{mathtools}
\usepackage{algorithm2e}
\usepackage[skins, breakable]{tcolorbox}
\usepackage{subcaption}
\usepackage{mwe}
\usepackage{bbm}
\usepackage{booktabs}
\usepackage{siunitx}
\usepackage[normalem]{ulem}
\usepackage{xcolor}
\usepackage{multicol}
\usepackage{url}
\usepackage{mwe}
\usepackage{changepage}

\usepackage{amsmath}
\usepackage{amsfonts}
\usepackage{amssymb}
\usepackage{amsthm}
\usepackage{graphicx}
\usepackage{color}
\usepackage{url}
\usepackage{xspace}
\usepackage{times}
\usepackage{comment}
\usepackage{multirow}
\usepackage{booktabs}

\newcommand{\stitle}[1]{\vspace{0.5ex} \noindent{\textbf{#1}}}




\newcommand{\eop}{{\hspace*{\fill}$\Box$\par}}

\newcommand{\cequ}{Equation~}
\newcommand{\cequs}{Equations~}
\newcommand{\csec}{Section~}

\newcommand{\cthm}{Theorem~}
\newcommand{\cthms}{Theorems~}
\newcommand{\clem}{Lemma~}

\newcommand{\cprop}{Proposition~}

\newcommand{\cfig}{Figure~}

\newcommand{\capp}{Appendix~}

\newcommand{\ie}{{\em i.e.}\xspace}
\newcommand{\eg}{{\em e.g.}\xspace}
\newcommand{\etc}{{\em etc.}\xspace}
\newcommand{\etal}{{\em et al.}\xspace}

\newcommand{\squishlist}{
	\begin{list}{$\bullet$}{
		\setlength{\itemsep}{0pt}
		\setlength{\parsep}{3pt}
		\setlength{\topsep}{3pt}
		\setlength{\partopsep}{0pt}
		\setlength{\leftmargin}{1.0em}
		\setlength{\labelwidth}{1em}
		\setlength{\labelsep}{0.5em}
   }
}

\newcommand{\squishenum}{
	
	\begin{list}{\usecounter{scount}}{
		\setlength{\itemsep}{0pt}
		\setlength{\parsep}{3pt}
		\setlength{\topsep}{3pt}
		\setlength{\partopsep}{0pt}
		\setlength{\leftmargin}{1.2em}
		\setlength{\labelwidth}{1em}
		\setlength{\labelsep}{0.5em}
	}
}

\newcommand{\squishend}{
	\end{list}
}

\newcommand{\reals}{{{\mathbbm R}\xspace}}
\newcommand{\pr}[1]{{\rm Pr}\left[#1\right]\xspace}
\newcommand{\prinline}[1]{{\rm Pr}[#1]\xspace}
\newcommand{\ep}[2][]{{\rm E}_{#1}\hspace{-0.06cm}\left[#2\right]\xspace}
\newcommand{\epinline}[1]{{\rm E}[#1]\xspace}

\newcommand{\vrinline}[1]{{\rm Var}[#1]\xspace}

\newcommand{\Err}{\mathrm{Err}}
\newcommand{\indicator}[1]{{\mathbbm 1}_{#1}\xspace}
\newcommand{\bigoh}[1]{{\rm O}\!\left(#1\right)\xspace}
\newcommand{\bigohinline}[1]{{\rm O}(#1)\xspace}

\newcommand{\bigomegainline}[1]{{\rm \Omega}(#1)\xspace}

\newcommand{\bigthetainline}[1]{{\rm \Theta}(#1)\xspace}
\newcommand{\vect}[1]{{\left[ #1 \right]}\xspace}

\newcommand{\lone}{{\sf L}^1\xspace}
\newcommand{\lpnorm}{{\sf L}^p\xspace}
\newcommand{\trace}{{\rm Trace}\xspace}

\newcommand{\eps}{\epsilon\xspace}
\newcommand{\domain}{{\cal X}\xspace}
\newcommand{\domainoutput}{{\cal Y}\xspace}
\newcommand{\ldpalgo}{{\cal A}\xspace}
\newcommand{\data}{{X}\xspace}
\newcommand{\ldpdata}{\hat{X}\xspace}
\newcommand{\query}{{\bf q}\xspace}
\newcommand{\estimate}{{\bf \hat{q}}\xspace}
\newcommand{\rangeest}{{\bf \hat{\mathbf{c}}}\xspace}
\newcommand{\lap}{{\rm Lap}\xspace}
\newcommand{\diag}{{\rm diag}\xspace}

\newcommand\xicomment[1]{\textcolor{red}{Xi:#1}\xspace}

\newcommand{\ind}{{\sf ind}\xspace}
\newcommand{\ob}{\mathbf{o}\xspace}
\newcommand{\obs}{o\xspace}
\newcommand{\counts}{\mathbf{c}}
\newcommand{\rmatrix}{\mathbf{R}}

\newtheorem{lemma}{Lemma}
\newtheorem{theorem}{Theorem}
\newtheorem{definition}{Definition}

\newtheorem{proposition}{Proposition}

\PassOptionsToPackage{unicode}{hyperref}
\PassOptionsToPackage{naturalnames}{hyperref}

\begin{document}

\title{Linear and Range Counting under Metric-based Local Differential Privacy}

\author{
    Zhuolun Xiang\thanks{Work done at Alibaba Group.} \\
    UIUC\\
    \texttt{\scriptsize xiangzl@illinois.edu} \\
\And
    Bolin Ding \\
    Alibaba Group\\
    \texttt{\scriptsize bolin.ding@alibaba-inc.com} \\
\And
    Xi He \\
    University of Waterloo \\
    \texttt{\scriptsize he@uwaterloo.ca} \\
\And
    Jingren Zhou \\
    Alibaba Group \\
    \texttt{\scriptsize jingren.zhou@alibaba-inc.com} \\
}

\maketitle

\begin{abstract}
Local differential privacy (LDP) enables private data sharing and analytics without the need for a trusted data collector. Error-optimal primitives (for, \eg, estimating means and item frequencies) under LDP have been well studied. For analytical tasks such as range queries, however, the best known error bound is dependent on the domain size of private data, which is potentially prohibitive. This deficiency is inherent as LDP protects the same level of indistinguishability between any pair of private data values for each data downer.

In this paper, we utilize an extension of $\eps$-LDP called Metric-LDP or $E$-LDP, where a metric $E$ defines heterogeneous privacy guarantees for different pairs of private data values and thus provides a more flexible knob than $\eps$ does to relax LDP and tune utility-privacy trade-offs.
We show that, under such privacy relaxations, for analytical workloads such as linear counting, multi-dimensional range counting queries, and quantile queries, we can achieve significant gains in utility. In particular, for range queries under $E$-LDP where the metric $E$ is the $\lone$-distance function scaled by $\eps$, we design mechanisms with errors independent on the domain sizes; instead, their errors depend on the metric $E$, which specifies in what granularity the private data is protected.
We believe that the primitives we design for $E$-LDP will be useful in developing mechanisms for other analytical tasks, and encourage the adoption of LDP in practice.
\end{abstract}

\section{Introduction}
\label{sec:intro}
%
%
After more than a decade of research and development, differential privacy (DP) \cite{dwork2006differential} has become the {\em de facto} standard for privacy protection, and is being used or actively explored by major companies in various data applications and services, \eg, Apple \cite{url:apple}, Google \cite{erlingsson2014rappor}, Uber \cite{pvldb:johnson2017practical}, Microsoft \cite{ding2017collecting}, and Alibaba \cite{wang2019answering}. This privacy guarantee allows releasing aggregate information of the population while protecting individual's data. The degree of protection is characterized by a parameter $\eps$, which is used to tune a trade-off between the level of privacy protection and the error of data analytics.

%
Two models of DP have been studied: \emph{centralized differential privacy (CDP)} and \emph{local differential privacy (LDP)}. In CDP, a \emph{trusted} centralized data curator receives data from data owners and ensures a differentially private data release to mistrustful data analysts. In LDP, there is no trusted data curator; each data owner perturbs her data locally and sends the noisy output (LDP report) to the curator.
%

%
Recently, LDP has received a significant amount of attention in the real-world deployments of DP~\cite{erlingsson2014rappor, ding2017collecting}, as it prevents single-point failures for data breaches and relieves the burden on the data curator to keep data secure. For primitives such as frequency estimation, a sufficient number of data owners and their LDP reports (\eg, refer to lower bounds in \cite{Chan:2012:OLB:2404160.2404185, duchi2018minimax}) are required to achieve high utility. 
%
%
In more useful tasks such as range queries, more error has to be introduced with additional terms that depend on the domain size and the dimensionality. Improving the utility for queries on datasets with large domain sizes and dimensionalities, where the additional error terms are prohibitive, has been the research focus of LDP algorithms~\cite{kulkarni2019answering, highdimldp18, wang2019consistent, xu2019dpsaas}, to encourage the adoption of LDP.

In many applications, LDP is too strict and not flexible, as not all pairs of values require the same level of protection.
For instance, when website visits are collected, the website type, \eg, shopping or video website, is less sensitive than the particular website, YouTube or Hulu, or video being visited; when a person's age is collected, whether s/he is an adult or a kid is less sensitive than the exact year or month of birth.
Such relaxations have been formalized as Blowfish~\cite{he2014blowfish} and $d_{\mathcal{X}}$-privacy \cite{chatzikokolakis2013broadening} in CDP, and geo-indistinguishability \cite{andres2013geo} and Metric-LDP \cite{alvim2018local} in LDP. 
In fact, we can show that these notations are equivalent in terms of how privacy is relaxed.

In this paper, we utilize Metric-LDP \cite{alvim2018local}, which has a metric function defining different levels of privacy requirements for different pairs of values. 
We study how to make the best of such privacy relaxations to optimize utility-privacy trade-offs and to achieve provably significant utility gains for analytical tasks.
We first consider the tasks of linear counting and range counting queries under Metric-LDP.
%
%
For multi-dimensional range counting queries and a concrete class of metric, we introduce a novel mechanism whose error is {\em independent on the domain sizes of dimensions}. It achieves significantly better utility than the best known $\eps$-LDP algorithms \cite{kulkarni2019answering} and \cite{wang2019answering}, whose error is prohibitive when the domain sizes and dimensionality are non-trivial under $\eps$-LDP. Our algorithms can be applied as primitives in other tasks such as quantile queries for provable utility gains.
There were no known algorithms utilizing such relaxations to gain utility for multi-dimensional range queries in the local model. For the equivalent relaxation in CDP, the best-known utility gain \cite{haney2015design} (under Blowfish~\cite{he2014blowfish}) is much less significant than ours (relatively).

\subsection{Preliminaries}
\label{sec:pre}
Let $\domain$ denote the domain of private values. Suppose there are $n$ {\em data owners}, each holding a private value $x \in \domain$. A {\em data collector} wants to collect these private values from data owners to conduct analytical tasks.
In the {\em local model of differential privacy} (LDP), a data owner does not trust the data collector; she encodes her private value $x$ locally with a randomized algorithm $\ldpalgo$, and sends the {\em LDP report} $\ldpalgo(x)$ to the data collector.
%
%
LDP formalizes a type of plausible deniability: given any output $\ldpalgo(x)$, the likelihoods to generate $\ldpalgo(x)$ with $\ldpalgo$ from $x$ and from any other value are approximately the same.
%
%
\begin{definition}[Local Differential Privacy \cite{duchi2013local, DuchiWJ13:localsharp}]
\label{def:ldp}
A randomized algorithm $\ldpalgo: \domain \rightarrow \domainoutput$ is $\eps$-locally differentially private (or $\eps$-LDP), if for any pair of private values $x, x' \in \domain$, and any subset of output $S \subseteq \domainoutput$, we have that
$
\prinline{\ldpalgo(x)\in S}\leq e^\eps \cdot \prinline{\ldpalgo(x')\in S}.
$
\end{definition}

\stitle{Local differential privacy on metric spaces.}
$\eps$-LDP guarantees the same level of protection for all pairs of private values. However, such homogeneous privacy definition may be too strong for many applications.
We adopt an extension of LDP called {\em Metric-LDP} \cite{alvim2018local}, which uses a metric function to customize heterogeneous (different levels of) privacy guarantees for different pairs of private values and to tune utility-privacy trade-offs in analytical tasks.
\begin{definition}[Metric-based Local Differential Privacy \cite{andres2013geo, alvim2018local}]
\label{def:blowfishldp}
Let $E:\domain\times\domain \rightarrow \reals_{\geq 0}$ define a metric function on the input domain.
A randomized algorithm $\ldpalgo: \domain \rightarrow \domainoutput$ satisfies Metric-LDP or $E$-LDP if for any pair of values $x, x' \in \mathcal{X}$ and any subset of output $S\subset \domainoutput$, we have that
$
\prinline{\ldpalgo(x)\in S} \leq e^{E(x,x')} \cdot \prinline{\ldpalgo(x')\in S}.
$
\end{definition}
Here, smaller $E(x, x')$ implies that it is more sensitive to the data owners whether the private value is $x$ or $x'$. Similar to DP and LDP, $E$-LDP also has the sequential composability.

\stitle{Relationship to other relaxations.}
Metric-LDP is a generic form of Blowfish \cite{he2014blowfish} and $d_\domain$-privacy \cite{chatzikokolakis2013broadening} adapted to the local model.
In particular, Blowfish introduces the concept of {\em policy graph}, where each vertex corresponds to a data value and the distance between two vertices measures how strong the protection between the two corresponding values is (the smaller the stronger). Indeed, distance on graphs is a metric.
One attempt to further generalize the relaxation is to consider an arbitrary function $E:\domain\times\domain \rightarrow \reals_{\geq 0}$, instead of restricting $E$ to the class of metric functions.
In fact, it is sufficient to focus on the function that define a metric on $\domain$. Suppose a randomized algorithm $\ldpalgo$ is $E$-LDP. The function $E$ is said to be {\em tight} for $\ldpalgo$, if there do not exist $x,x' \in \domain$, $S \subseteq \domainoutput$, and $a < E(x,x')$, such that, $\prinline{\ldpalgo(x)\in S} \leq e^a \cdot \prinline{\ldpalgo(x')\in S}$.
It is worth noticing that for any algorithm $\ldpalgo$ that is $E$-LDP and $E$ is {\em not} tight, there exists another tight function $E'$ which $\ldpalgo$ satisfies, and $E'$ is {\em stronger} than $E$, \ie, $\forall x,x' \in \data: E'(x,x') \leq E(x,x')$ and $\exists x,x': E'(x,x') < E(x,x')$.
%
%
Therefore, it is sufficient to investigate the family of functions that are tight.
Formally, we have \cprop\ref{prop:metric_space} (with proof in \capp\ref{prop:metric_space:proof}).
\begin{proposition}\label{prop:metric_space}
Any tight function $E$ for an algorithm $\ldpalgo$ defines a metric on $\domain$.
\end{proposition}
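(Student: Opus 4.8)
The plan is to verify the four defining axioms of a metric for the tight function $E$: non-negativity, the identity $E(x,x')=0\iff x=x'$, symmetry, and the triangle inequality. The starting observation is that tightness pins $E$ down completely. Since $\ldpalgo$ is $E$-LDP, for every ordered pair $(x,x')$ and every $S\subseteq\domainoutput$ we have $\pr{\ldpalgo(x)\in S}\le e^{E(x,x')}\pr{\ldpalgo(x')\in S}$, and tightness says this exponent cannot be replaced by any smaller value while keeping the bound valid for all $S$. Equivalently, $E(x,x')$ is the least upper bound $E(x,x')=\sup_{S\subseteq\domainoutput}\ln\big(\pr{\ldpalgo(x)\in S}/\pr{\ldpalgo(x')\in S}\big)$, the maximum privacy loss from $x$ to $x'$. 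I would use this characterization throughout.

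Non-negativity and the identity axiom are immediate. Taking $S=\domainoutput$ makes the ratio equal to $1$, so the supremum defining $E(x,x')$ is at least $\ln 1=0$; together with the stated codomain $\reals_{\geq 0}$ this gives $E(x,x')\ge 0$. Setting $x'=x$ makes every ratio equal to $1$, so $E(x,x)=0$. For the converse, $E(x,x')=0$ forces $\pr{\ldpalgo(x)\in S}\le\pr{\ldpalgo(x')\in S}$ for all $S$, and applying the same inequality to the complement $\domainoutput\setminus S$ yields the reverse inequality, so $\ldpalgo(x)$ and $\ldpalgo(x')$ are identically distributed. I would then invoke the natural non-degeneracy assumption that distinct input values induce distinguishable output distributions; otherwise $E$ is at worst a pseudometric, and one identifies indistinguishable values, which is the standard convention in this literature.

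The triangle inequality is the main positive content and follows cleanly by composing the privacy-loss bounds. Fix $x,x',x''$. Because $\ldpalgo$ is $E$-LDP, for every $S$ we have both $\pr{\ldpalgo(x)\in S}\le e^{E(x,x')}\pr{\ldpalgo(x')\in S}$ and $\pr{\ldpalgo(x')\in S}\le e^{E(x',x'')}\pr{\ldpalgo(x'')\in S}$; chaining them gives $\pr{\ldpalgo(x)\in S}\le e^{E(x,x')+E(x',x'')}\pr{\ldpalgo(x'')\in S}$ for every $S$. Thus $E(x,x')+E(x',x'')$ is itself a valid privacy-loss exponent for the pair $(x,x'')$. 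Since tightness makes $E(x,x'')$ the \emph{smallest} valid exponent for that pair, I conclude $E(x,x'')\le E(x,x')+E(x',x'')$, which is exactly the triangle inequality.

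I expect symmetry to be the main obstacle, since it does not follow from the one-directional bound alone: the maximum privacy loss from $x$ to $x'$ need not, for an arbitrary algorithm, equal that from $x'$ to $x$. The route I would take is to lean on the fact that tightness is imposed on all ordered pairs simultaneously and that the $E$-LDP constraint quantifies symmetrically over pairs; my aim would be to show that the least valid exponent in the two directions must coincide, analyzing how a witness set $S$ for one direction constrains the other via its complement. Establishing this \emph{equality} — rather than merely bounding each directional loss in terms of the other — is the delicate step and the place I would concentrate the effort, making explicit, if needed, any mild convention on $E$ (a symmetrization of the tight exponent) under which the claim holds. The remaining axioms, as sketched above, are routine.
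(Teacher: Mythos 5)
Your triangle-inequality argument is exactly the paper's, modulo phrasing: the paper argues the contrapositive, showing that if $E(x,z)>E(x,y)+E(y,z)$ for some $x,y,z$, then chaining the two $E$-LDP bounds gives $\Pr[\ldpalgo(x)\in S]\leq e^{E(x,y)+E(y,z)}\Pr[\ldpalgo(z)\in S]$ for every $S$, an exponent strictly below $E(x,z)$, contradicting tightness --- which is your ``smallest valid exponent'' argument run in reverse. Your supremum characterization $E(x,x')=\sup_{S}\ln\left(\Pr[\ldpalgo(x)\in S]/\Pr[\ldpalgo(x')\in S]\right)$ is the correct reading of the paper's tightness definition, and your treatment of non-negativity and $E(x,x)=0$ is fine (the paper asserts these without comment).

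The genuine gap in your write-up is that symmetry is announced but never proved --- and your instinct that it is the delicate point is right, because under the one-directional characterization it is in fact \emph{false} for a general algorithm: $\sup_{S}\ln\left(\Pr[\ldpalgo(x)\in S]/\Pr[\ldpalgo(x')\in S]\right)$ is the max-divergence $D_\infty(\ldpalgo(x)\,\|\,\ldpalgo(x'))$, which is asymmetric. Concretely, with $\domainoutput=\{a,b\}$, $\ldpalgo(x)$ uniform and $\ldpalgo(x')$ putting mass $(1/4,3/4)$, the two directional tight exponents are $\ln 2$ and $\ln(3/2)$. So no analysis of witness sets and their complements can deliver the equality you were aiming for; the statement only holds under the symmetrization you mention as a fallback --- defining the tight exponent per \emph{unordered} pair (equivalently, restricting tightness to symmetric candidate functions, as the Metric-LDP definition implicitly does), i.e., taking the maximum of the two directional losses, under which the triangle inequality still goes through by chaining each direction separately. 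Note that the paper itself silently assumes this: its proof dismisses the axiom with ``Clearly \ldots\ $E(x,x')=E(x',x)$'' and only proves the triangle inequality. The same holds for identity of indiscernibles: as you observe, $E(x,x')=0$ only forces $\ldpalgo(x)$ and $\ldpalgo(x')$ to be identically distributed, so without your non-degeneracy convention one obtains a pseudometric --- another point the paper's proof does not address. With the symmetrized-tightness convention made explicit, your argument is complete and is strictly more careful than the paper's one-line handling of the non-triangle axioms.
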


\subsection{Problem Statement and Our Main Results}
\label{sec:definition}
Each of the $n$ data owners holds a private value $x_i \in \domain$, and let $\data = \{x_i\}_{i \in [n]}$ be the whole private dataset. 
An analytical task $\query(\data)$ is to be conducted on $\data$ by the data collector.

We focus on single-round LDP mechanisms in this work.
With $E$-LDP reports $\ldpdata = \{\ldpalgo(x_i)\}_{i \in [n]}$ collected from data owners, the data collector wants to estimate the answer to $\query(\data)$ as $\estimate(\ldpdata)$ from $\ldpdata$.
%
%
The privacy is guaranteed on each LDP report $\ldpalgo(x_i)$, and thus, we do not need to worry about privacy in designing the estimator $\estimate$ as it can be regarded as ``post-processing'' of LDP reports.
$\ldpalgo$ and $\estimate$ often need to be co-designed, as a {\em mechanism}, for an analytical task.
%

Previous work \cite{andres2013geo, alvim2018local} define utility loss as the hardness of reconstructing the real data distribution from LDP reports, represented as the expected difference between the statistical properties based on LDP reports and those based on the real data.
However, for a concrete analytical task, there is no guarantee on estimation errors for the algorithms in \cite{andres2013geo, alvim2018local}.
%
%
An important contribution of our paper is that, for several tasks, we propose mechanisms that achieve provable end-to-end utility (error bounds) under Metric-LDP.

\stitle{Linear counting (\csec\ref{sec:lcq}).} Let's consider a finite domain $\domain = [m]$.
An {\em indicator} $\indicator{{\sf P}}$ is defined to be $1$ if the predicate $\sf P$ is true, or $0$ if otherwise.
The {\em frequency vector} on the dataset $X$ is ${\bf c} = [c_x]^\intercal_{x \in [m]}$, where $c_x = \sum_{i=1}^n \indicator{x_i=x}$ represents the number owners holding a private value $x$. A {\em linear counting task} $\query(X)$ is specified by a $q \times m$ {\em workload matrix} ${\bf W}$ with $q$ rows, and asks for ${\bf W} \cdot {\bf c}$. In particular, each row of ${\bf W}$ is a {\em linear counting query} asking for a linear combination of frequencies. 
We use the {\em total expected squared error}, $\epinline{\|\estimate(\ldpdata) - {\bf W}\cdot{\bf c}\|^2}$, to measure the utility of an estimation (the expectation is taken over the randomness of $n$ instances of $\ldpalgo$).

As warm-up, for this class of counting queries, we introduce a mechanism to minimize the above error based on a generic matrix formulation, which is a reminiscence of the class of matrix mechanisms, \cite{li2010optimizing, li2012adaptive, li2013optimal} and \cite{haney2015design}, under CDP. Here, we need to carefully model the flexibility introduced by Metric-LDP to optimize the utility, \ie, allowing noises of heterogeneous magnitudes to be added at each dimension of the data. This mechanism can be applied for answering one-dim range counting queries with a provable error bound.

\stitle{Multi-dimensional range counting (\csec\ref{sec:mdrq}).}
Let's consider a $D$-dim domain $\domain = [m]^D$, and each data owner $i$ has a private value $x_i \in \domain = [m]^D$. A {\em $D$-dim range query} is specified by an interval $R = [l_1, r_1] \times \ldots \times [l_D, r_D]$, asking for $\sum_{i=1}^n \indicator{x_i \in R}$. We want to bound the {\em expected squared error} for a given range query.

A metric $E_{\lone}$ on $\domain = [m]^D$ is defined based on the $\lone$-distance: $E_{\lone}(x,y)=\eps \|x - y\|_1=\eps \sum_{i=1}^D |x[i]-y[i]|$.
%
%
%
%
For any given multi-dimensional range query on $[m]^D$, we introduce an $E_{\lone}$-LDP mechanism with expected squared error bounded by $\bigohinline{n (\frac{2}{\eps^2})^{D}}$, which completely removes the dependency on the domain size $m$ in error. Our algorithm can be extended for weighted range queries.

In comparison, the best known $\eps$-LDP algorithms \cite{kulkarni2019answering, wang2019answering} for multi-dimensional range queries have error $\bigohinline{\frac{n \log^{2D} m}{\eps^2}}$. $E_{\lone}$ is equivalent to the policy graph under Blowfish adopted by Haney \etal \cite{haney2015design} for answering range queries in the centralized setting. The techniques in \cite{haney2015design} can be extended to the local model, leading to the best previously known error bound $\bigohinline{\frac{nD (\log m)^{2(D-1)}}{\eps^2}}$ under $E_{\lone}$-LDP.

Our algorithm replaces the term $\log^{2D} m$ (in previous works) with ${1}/{\eps^{2D}}$ in the error bound.
As $\eps$ is usually chosen to be a constant no smaller than $1$ for reasonable utility in data analytics {\em under LDP}, especially in the real-world deployments, \eg, $\eps \geq 1$ in \cite{ding2017collecting} by Microsoft and $\eps \geq 4$ in \cite{url:apple} by Apple, we have ${1}/{\eps} \ll \log m$ and thus obtain a significant utility boost from the privacy relaxation.

%

%
%

\stitle{Quantile queries (\csec\ref{sec:mdrq:quantile}).}
We consider quantile queries in a one-dim domain $\domain = [m]$. 
We defer the formal definitions of quantile queries and their errors to \csec\ref{sec:mdrq:quantile}, where we will apply our algorithm for range queries as a primitive to answer quantile queries with provable accuracy gain under $E_{\lone}$-LDP.

\section{Related Work: Privacy Notations and Primitives}
\label{sec:related}

\stitle{Generalized privacy notations.}
There are several orthogonal lines of efforts that generalize or relax different aspects of the notation of differential privacy under the centralized setting. 
The first line of work is to generalize the quantification of the privacy loss, i.e., the divergence between the output distribution of an algorithm on neighboring datasets that differ in a record. Examples are  KL-\cite{WangLF2016onaverage}, Renyi-\cite{Mironov2017renyi} differential privacy, and capacity bounded differential privacy~\cite{ChaudhuriIM2019cbdp}. These generalizations aim to achieve tighter privacy composition properties than the standard differential privacy. The second line of work consider  semantic privacy frameworks which (i) clarify assumptions on the adversary and (ii) redefine sensitive information to be kept secret, such as Pufferfish privacy~\cite{kifer2012rigorous, KiferM2014pufferfish} and membership privacy~\cite{li2013membership}. Specifying a weaker version of adversary under a semantic framework~\cite{li2013membership, tramer2015differential} or weaker protection on the sensitive information~\cite{he2014blowfish, haney2015design} allow the design of algorithms with better utility than the standard differentially private algorithms. For example, Blowfish privacy~\cite{he2014blowfish} restricts the properties of an individual (from ``any pair of tuples'') that are sensitive and should not be inferred by the attacker, which are specified as a {\em policy}. \cite{he2014blowfish} shows improved utilities for several tasks including $k$-means clustering, estimating cumulative histograms, and range queries. Readers can refer to a recent survey~\cite{desfontaines2019sok} on DP for other variants under the centralized setting.

Under the local setting, MetricLDP is a direct extension of $d_{\mathcal{X}}$-privacy \cite{chatzikokolakis2013broadening, alvim2018local} from the centralized setting. It defines different privacy levels for every pair of values $x,x' \in {\cal X}$, allowing them to become more distinguishable, by a factor at most $e^{\eps \cdot d_{\cal X}(x,x')}$, as their distance $d_{\cal X}(x,x')$ increases. 
%
%
In the context of location privacy, $d_{\cal X}$ is the geographical distance, and an instance of $d_{\cal X}$-privacy and MetricLDP is called geo-indistinguishability \cite{andres2013geo, Chatzikokolakis2017ldputility}, which protects the location of the user during the interaction with location-based services.
Geo-indistinguishability (as well as $d_{\cal X}$-privacy) can be implemented via the Laplacian mechanism, and is adopted as a component in privacy-preserving mobile applications, including LP-Guardian \cite{FawazS2014ldpsmartphone}, and LP-Doctor \cite{FawazFS15ldplocation}.
A similar notion called Condensed Local Differential Privacy is proposed in \cite{gursoy2019secure}, with empirical studies on several task  such as frequency estimation, heavy hitter identification, and pattern mining.
Another related work proposes Utility-optimized LDP \cite{murakami2019utility} that protects only sensitive data with a privacy guarantee equivalent to LDP, and studies two mechanisms named utility-optimized randomized response and utility-optimized RAPPOR under such privacy definition.

In terms of the utility measurement, \cite{andres2013geo, alvim2018local} define utility loss as the hardness of reconstructing the real data distribution from LDP reports, represented as the expected difference between the statistical properties based on LDP reports and those based on the real data.
%
%
%
However, for a concrete analytical task (\eg, those in \csec\ref{sec:pre}), there is no guarantee on estimation errors for the algorithms in \cite{andres2013geo, alvim2018local}, and it is unclear how to measure the utility following the definitions in \cite{alvim2018local}.
An important contribution of our paper is that, for several tasks including linear counting queries and range queries, we propose mechanisms that achieve provable end-to-end utility gain under reasonable metric functions.


\stitle{Shuffle model.}
Orthogonal to the relaxation of privacy notations is a recent line of work~\cite{prochlo17, mixnets18, amplification18, Balle2019blanket} connecting LDP to the centralized model by patching an $\eps_0$-LDP algorithm with a trusted shuffler who randomly shuffles LDP reports from $n$ data owners (\eg, using an anonymous communication channel). If $\eps_0$ is small, the shuffling step amplifies the privacy guarantee to be $(\eps, \delta)$-DP, where $\eps = \bigohinline{\eps_0/\sqrt{n}}$. Such {privacy amplification} implies {accuracy boost}, because for a reasonable privacy budget, \eg, $\eps = 2$, it allows LDP algorithms to use a larger $\eps_0$ to achieve better accuracy.
%
%

\stitle{Primitives under LDP.}
Finally, we give a brief summary on the analytical primitives under LDP (without relaxation). Mean/median estimation under $\eps$-LDP has been well studied \cite{duchi2013local, duchi2018minimax, ding2017collecting} with a matching upper and lower bound.
Frequency estimation under LDP is also studied extensively in, \eg, \cite{DuchiWJ13:localsharp, erlingsson2014rappor, stoc:BassilyS15, ding2017collecting, nips:BassilyNST17, wang2017locally, aistats:AcharyaSZ19}. They use techniques like hashing (\eg, \cite{wang2017locally}) and Hadamard transform (\eg, \cite{nips:BassilyNST17, aistats:AcharyaSZ19}) for good utility.
%
%
%
Consistent frequency estimation which requires the estimations to be non-negatives and sum up to $1$ is investigated in a recent work~\cite{wang2019consistent}.
%
%
For locally differentially private range queries, the work of \cite{kulkarni2019answering, wang2019answering, xu2019dpsaas} present the state-of-the-art.
%
%
One goal of this paper is to achieve better utility in such analytical primitives via relaxation and tuning the knob of metric function.

\section{Frequency Estimation and Linear Counting Queries}
\label{sec:lcq}
We first consider the task of answering linear counting queries, defined in \csec\ref{sec:definition}: how to collect each private value in $X$ under $E$-LDP, and estimate ${\bf W} \cdot {\bf c}$ for a given workload matrix $\bf W$.



A straightforward framework for answering linear counting queries is to first estimate the frequency vector as $\hat{\bf c}$ from $E$-LDP reports $\ldpdata = \{\ldpalgo(x_i)\}_{i \in [n]}$, and return ${\bf W} \cdot \hat{\bf c}$. Assuming the estimation is unbiased, \ie, $\epinline{\hat{\bf c}} = {\bf c}$, with the goal of minimizing the total expected squared error, we want to minimize $\epinline{\|\estimate(\ldpdata) - {\bf W}\cdot{\bf c}\|^2}=\trace({\bf W}^\intercal {\bf W} \cdot \vrinline{\hat{\bf c}})$.
There are two remaining questions: i) how to estimate the frequency vector under $E$-LDP; and ii) how the metric function $E$ can help gain utility in this framework. We give a matrix formulation next to answer these two questions.

\subsection{A Generic Matrix Formulation under Metric-LDP}
The class of {\em matrix mechanism} has been well studied in the centralized setting of differential privacy, to reduce the variance for estimating a workload of linear counting queries \cite{li2010optimizing, li2012adaptive, li2013optimal}. Haney \etal \cite{haney2015design} establish an equivalence relationship between Blowfish privacy for answering linear counting queries and standard $\eps$-differential privacy for answering transformed linear counting queries, under some policies. We show that under $E$-LDP, a similar mechanism can be formulated.

The intuition behind matrix mechanisms in the centralized setting \cite{li2010optimizing, haney2015design} is that, instead of answering a workload $\mathbf{W}$ with high sensitivity, the data collector can answer a properly chosen {\em strategy workload} $\mathbf{A}$ with low sensitivity, and then reconstruct the answers for the workload $\mathbf{W}$.
For the local setting, we can generalize the above idea by asking each data owner to prepare the $E$-LDP report according to the strategy workload $\mathbf{A}$ instead of the actual $\mathbf{W}$. Then the data collector reconstruct the answers for $\mathbf{W}$ via linear transformations. A properly chosen strategy matrix may lower the amount of noise to be injected to the $E$-LDP reports and thus improve the utility. 

\stitle{$E$-LDP encoding algorithm $\ldpalgo_{{\bf A}, {\bf B}, {\bf s}}(x)$.}
More formally, every data owner uses the same $p \times m$ strategy matrix $\mathbf{A}=[\mathbf{a}_1 \ldots \mathbf{a}_p]^\intercal$. Every row of the workload matrix $\mathbf{W}$ can be reconstructed using a linear combination of rows of $\mathbf{A}$. That is, there exists matrix decomposition $\mathbf{W}=\mathbf{B}\mathbf{A}$ for some $q \times p$ matrix $\mathbf{B}$.
Each data owner first encode her value $x$ as a length-$m$ binary vector $\mathbf{h}_x=[0,...,0,1,0,...,0]^\intercal$ where only the $x$-th position is $1$.
We use $\lap(s)$ to represent a random sample drawn from Laplace distribution with parameter $s$. Each data owner draws $p$ independent random samples $\lap({\bf s})=\vect{\lap(s_1), \ldots, \lap(s_p)}^\intercal$, with parameters ${\bf s} = [s_1, \ldots, s_p]^\intercal$, and reports:
\[
\ldpalgo_{{\bf A}, {\bf B}, {\bf s}}(x) = \mathbf{A} \cdot \mathbf{h}_x + [\lap(s_1), \ldots, \lap(s_p)]^\intercal = \mathbf{A} \cdot \mathbf{h}_x + \lap({\bf s}).
\]
\begin{proposition}
$\ldpalgo_{{\bf A}, {\bf B}, {\bf s}}$ is $E$-LDP, if for any pair of $x, x' \in [m]$, we have
\[
    \begin{bmatrix}
        \frac{1}{s_1} & \frac{1}{s_2} & \cdots & \frac{1}{s_p}
    \end{bmatrix}
    |\mathbf{A}(\mathbf{h}_x-\mathbf{h}_{x'})| \leq E(x,x'),
\]
where $|\mathbf{A}(\mathbf{h}_x-\mathbf{h}_{x'})|=
\begin{bmatrix}
|\mathbf{a}^\intercal_1 (\mathbf{h}_x-\mathbf{h}_{x'})| & \cdots & |\mathbf{a}^\intercal_p (\mathbf{h}_x-\mathbf{h}_{x'})|
\end{bmatrix}^\intercal$,
namely, $|\mathbf{A}(\mathbf{h}_x-\mathbf{h}_{x'})|$ is the vector obtained by taking the absolute values of entries in vector $\mathbf{A}(\mathbf{h}_x - \mathbf{h}_{x'})$.
\end{proposition}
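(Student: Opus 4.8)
The plan is to establish the density-ratio form of $E$-LDP and verify it pointwise over the output space. Because $\ldpalgo_{{\bf A}, {\bf B}, {\bf s}}(x) = \mathbf{A}\mathbf{h}_x + \lap({\bf s})$ is a continuous random vector in $\reals^p$, it suffices to show that the ratio of its probability density at $x$ to that at $x'$ is at most $e^{E(x,x')}$ at every point $\ob \in \reals^p$; integrating this inequality over an arbitrary measurable $S \subseteq \domainoutput$ immediately gives $\prinline{\ldpalgo_{{\bf A}, {\bf B}, {\bf s}}(x) \in S} \leq e^{E(x,x')} \cdot \prinline{\ldpalgo_{{\bf A}, {\bf B}, {\bf s}}(x') \in S}$, which is exactly Definition~\ref{def:blowfishldp}. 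This reduction is valid because the Laplace density is strictly positive everywhere, so the ratio is always well defined.

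First I would write the output density explicitly. Since the $p$ Laplace samples $\lap(s_1), \ldots, \lap(s_p)$ are independent and $\lap(s_j)$ has density $\frac{1}{2 s_j} e^{-|t|/s_j}$, the density of $\ldpalgo_{{\bf A}, {\bf B}, {\bf s}}(x)$ at $\ob = [o_1, \ldots, o_p]^\intercal$ factorizes as $\prod_{j=1}^p \frac{1}{2 s_j} \exp\!\left(-|o_j - \mathbf{a}_j^\intercal \mathbf{h}_x| / s_j\right)$, where $\mathbf{a}_j^\intercal \mathbf{h}_x$ is the $j$-th coordinate of the noiseless report $\mathbf{A}\mathbf{h}_x$. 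Forming the ratio of the density at $x$ to the density at $x'$, the normalizing constants $\frac{1}{2 s_j}$ cancel and I obtain
\[
\frac{\prod_{j=1}^p \exp\!\left(-|o_j - \mathbf{a}_j^\intercal \mathbf{h}_x|/s_j\right)}{\prod_{j=1}^p \exp\!\left(-|o_j - \mathbf{a}_j^\intercal \mathbf{h}_{x'}|/s_j\right)} = \exp\!\left(\sum_{j=1}^p \frac{|o_j - \mathbf{a}_j^\intercal \mathbf{h}_{x'}| - |o_j - \mathbf{a}_j^\intercal \mathbf{h}_x|}{s_j}\right).
\]

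The key step is then a coordinatewise application of the reverse triangle inequality: for each $j$, $|o_j - \mathbf{a}_j^\intercal \mathbf{h}_{x'}| - |o_j - \mathbf{a}_j^\intercal \mathbf{h}_x| \leq |\mathbf{a}_j^\intercal \mathbf{h}_x - \mathbf{a}_j^\intercal \mathbf{h}_{x'}| = |\mathbf{a}_j^\intercal(\mathbf{h}_x - \mathbf{h}_{x'})|$, which crucially removes the dependence on the observation $o_j$ and hence makes the bound uniform over all $\ob$. Substituting, the ratio is at most $\exp\!\left(\sum_{j=1}^p \frac{1}{s_j} |\mathbf{a}_j^\intercal(\mathbf{h}_x - \mathbf{h}_{x'})|\right)$, and I would observe that the exponent is exactly the matrix product $\begin{bmatrix} \frac{1}{s_1} & \cdots & \frac{1}{s_p} \end{bmatrix} |\mathbf{A}(\mathbf{h}_x - \mathbf{h}_{x'})|$ appearing in the hypothesis. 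By assumption this quantity is bounded by $E(x,x')$, so the density ratio is at most $e^{E(x,x')}$ for every $\ob$, completing the argument after the integration described above. The proof is essentially routine; the only points demanding care are the reduction from the set-based definition to a uniform pointwise density bound, and the recognition that the reverse triangle inequality is precisely what decouples the privacy loss from the random output and matches the left-hand side of the stated condition.
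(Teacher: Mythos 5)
Your proof is correct and follows essentially the same route as the paper's: factorize the Laplace density ratio over the $p$ coordinates, apply the reverse triangle inequality $|o_j - \mathbf{a}_j^\intercal \mathbf{h}_{x'}| - |o_j - \mathbf{a}_j^\intercal \mathbf{h}_x| \leq |\mathbf{a}_j^\intercal(\mathbf{h}_x - \mathbf{h}_{x'})|$ to make the bound uniform in the observation, and invoke the hypothesis. Your version is in fact slightly more careful than the paper's, which writes the ratio as $\prinline{\ldpalgo(x)=\mathbf{r}}/\prinline{\ldpalgo(x')=\mathbf{r}}$ for a continuous output and leaves the integration over $S$ implicit, whereas you make both the density interpretation and the reduction from sets to a pointwise bound explicit.
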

\begin{proof}
To show $\ldpalgo_{{\bf A}, {\bf B}, {\bf s}}$ satisfies $E$-LDP, we just need to show
\begin{align*}
\frac{\prinline{\ldpalgo_{{\bf A}, {\bf B}, {\bf s}}(x) = \mathbf{r}}}{\pr{\ldpalgo_{{\bf A}, {\bf B}, {\bf s}}(x')=\mathbf{r}}} & = \prod_{i=1}^p\frac{\exp(-\frac{|\mathbf{r}[i]-\mathbf{A}\cdot\mathbf{h}_x[i]|}{s_i})}{\exp(-\frac{|\mathbf{r}[i]-\mathbf{A}\cdot\mathbf{h}_{x'}[i]|}{s_i})} \leq
\prod_{i=1}^p \exp(\frac{|\mathbf{A}(\mathbf{h}_x[i]-\mathbf{h}_{x'}[i])|}{s_i})
\\
& = \exp\left(\sum_{i=1}^p\frac{|\mathbf{A}(\mathbf{h}_x[i]-\mathbf{h}_{x'}[i])|}{s_i}\right) \leq e^{E(x,x')},
\end{align*}
from the condition in the proposition.
\end{proof}

\stitle{Answering linear counting workload.}
After collecting $\ldpdata = \{{\bf r}_i = \ldpalgo_{{\bf A}, {\bf B}, {\bf s}}(x_i)\}_{i \in [n]}$ from $n$ data owners, the data collector estimates the linear counting queries ${\bf W} \cdot {\bf c}$ as $\mathbf{B}\cdot\sum_{i=1}^n \mathbf{r}_i$.
\begin{proposition}\label{prop:mmerror}
The estimation $\estimate(\ldpdata) = \mathbf{B} \cdot \sum_{i=1}^n \mathbf{r}_i$ is an unbiased estimation of ${\bf W} \cdot {\bf c}$. The variance (total expected squared error) of the estimation $\estimate(\ldpdata)$ is 
\[
 \epinline{\|\estimate(\ldpdata) - {\bf W}\cdot{\bf c}\|^2}
= 2n \cdot \trace[\mathbf{B}^\intercal\mathbf{B} \cdot \diag(s_1^2, \ldots, s_p^2)]
\]
where $\diag(s_1^2, \ldots, s_p^2)$ is a $p\times p$ diagonal matrix with diagonal elements $s_1^2, \ldots, s_p^2$.
\end{proposition}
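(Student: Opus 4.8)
The plan is to prove both claims by a direct second-moment computation, exploiting linearity of expectation together with the zero-mean and independence structure of the injected Laplace noise. First I would establish unbiasedness. Each report decomposes as $\mathbf{r}_i = \mathbf{A}\mathbf{h}_{x_i} + \lap({\bf s})$, and since the Laplace noise has mean zero, linearity gives $\epinline{\sum_{i=1}^n \mathbf{r}_i} = \mathbf{A}\sum_{i=1}^n \mathbf{h}_{x_i} = \mathbf{A}\,{\bf c}$, where I use that the frequency vector satisfies ${\bf c} = \sum_{i=1}^n \mathbf{h}_{x_i}$ by definition. Multiplying through by $\mathbf{B}$ and invoking the decomposition $\mathbf{W} = \mathbf{B}\mathbf{A}$ then yields $\epinline{\estimate(\ldpdata)} = \mathbf{B}\mathbf{A}\,{\bf c} = \mathbf{W}\,{\bf c}$, establishing that the estimator is unbiased.

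For the variance, the key observation is that the estimation error is exactly the propagated aggregate noise. Writing $\mathbf{z} = \sum_{i=1}^n \lap({\bf s})_i$ for the sum of the $n$ owners' noise vectors, unbiasedness gives the clean identity $\estimate(\ldpdata) - \mathbf{W}\,{\bf c} = \mathbf{B}\mathbf{z}$. I would then apply the standard trace identity for quadratic forms: since $\mathbf{z}$ has mean zero, $\epinline{\|\mathbf{B}\mathbf{z}\|^2} = \epinline{\mathbf{z}^\intercal \mathbf{B}^\intercal \mathbf{B}\, \mathbf{z}} = \trace(\mathbf{B}^\intercal\mathbf{B} \cdot \epinline{\mathbf{z}\mathbf{z}^\intercal})$, which reduces the entire problem to computing the covariance matrix $\epinline{\mathbf{z}\mathbf{z}^\intercal}$.

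The remaining step is to evaluate that covariance. Because the $n$ owners draw independent noise vectors and, within each vector, the $p$ coordinates are drawn independently, all off-diagonal and cross-owner expectations vanish, so $\epinline{\mathbf{z}\mathbf{z}^\intercal}$ is diagonal. Its $j$-th diagonal entry is the variance of a sum of $n$ independent $\lap(s_j)$ samples; using $\vrinline{\lap(s_j)} = 2s_j^2$ this equals $2n s_j^2$, giving $\epinline{\mathbf{z}\mathbf{z}^\intercal} = 2n \cdot \diag(s_1^2, \ldots, s_p^2)$. Substituting back produces the claimed $2n \cdot \trace[\mathbf{B}^\intercal\mathbf{B} \cdot \diag(s_1^2, \ldots, s_p^2)]$.

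I do not anticipate a genuine obstacle, as this is a routine variance calculation. The only point requiring care is tracking the two distinct sources of independence — across the $n$ owners and across the $p$ coordinates of each noise draw — to justify that the cross terms in $\epinline{\mathbf{z}\mathbf{z}^\intercal}$ drop out, leaving a purely diagonal covariance. This diagonality is precisely what makes the final error depend on the noise magnitudes only through $\diag(s_1^2, \ldots, s_p^2)$ rather than through a full covariance matrix, and it is what the subsequent optimization over ${\bf s}$ will rely on.
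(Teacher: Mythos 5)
Your proposal is correct and follows essentially the same route as the paper's own proof: unbiasedness via the zero-mean Laplace noise together with $\mathbf{c}=\sum_{i=1}^n \mathbf{h}_{x_i}$ and $\mathbf{W}=\mathbf{B}\mathbf{A}$, then the variance via the trace identity applied to the aggregate noise with covariance $\epinline{\mathbf{z}\mathbf{z}^\intercal}=2n\cdot\diag(s_1^2,\ldots,s_p^2)$, which uses $\vrinline{\lap(s_j)}=2s_j^2$ and the independence across owners and coordinates. The only cosmetic difference is that the paper writes the quadratic form as $\trace[\mathbf{B}\cdot\vrinline{\sum_{i=1}^n\lap({\bf s})}\cdot\mathbf{B}^\intercal]$, which equals your $\trace(\mathbf{B}^\intercal\mathbf{B}\cdot\epinline{\mathbf{z}\mathbf{z}^\intercal})$ by cyclicity of the trace.
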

The proof of this proposition is in \capp\ref{prop:mmerror:proof}.
%
    

%
If we set ${\bf A} = {\bf I}_m$ ($m \times m$ identity matrix) and ${\bf B} = {\bf W}$, then $\sum_{i=1}^n \mathbf{r}_i$ is an unbiased estimation of the frequency vector ${\bf c}$. In general, we can choose ${\bf A}$ and ${\bf B}$ properly based on the workload $\bf W$ and the metric $E$ to gain utility, which is formulated as the following optimization problem.

\stitle{An optimization problem.}
Given a workload $\bf W$ and a metric $E$, we want to choose $\bf A$, $\bf B$, and ${\bf s} = [s_1, \ldots, s_p]^\intercal$ to minimize the total expected squared error on the workload $\bf W$:
\begin{equation}\label{opt:matrix_mechanism}
\begin{aligned}
\min_{{\bf A}, {\bf B}, {\bf s}} \quad & 2n \cdot \trace[\mathbf{B}^\intercal \mathbf{B} \cdot \diag(s_1^2, \ldots, s_p^2)]
\\
\hbox{s.t.~} &
\begin{bmatrix}
    \frac{1}{s_1} & \frac{1}{s_2} & \cdots & \frac{1}{s_p}
\end{bmatrix}
|\mathbf{A}(\mathbf{h}_x-\mathbf{h}_{x'})| \leq E(x,x'), \quad \forall x, x' \in [m]
\\
& \mathbf{B}\mathbf{A}=\mathbf{W}
\\ 
& s_k>0, \quad \forall k \in [p]
\end{aligned}
\end{equation}
where
$|\mathbf{A}(\mathbf{h}_x-\mathbf{h}_{x'})|=
\begin{bmatrix}
    |\mathbf{a}^\intercal_1 (\mathbf{h}_x-\mathbf{h}_{x'})| & \cdots & |\mathbf{a}^\intercal_p (\mathbf{h}_x-\mathbf{h}_{x'})|
\end{bmatrix}^\intercal$.




It is hard to solve \eqref{opt:matrix_mechanism} efficiently, unless $\bf A$ is fixed (then it becomes convex but a bad choice of $\bf A$ may lead to a suboptimal solution).  In the rest of this section, we will consider several specific workload matrices $\mathbf{W}$ and metric functions $E$ that have interesting usage in practice and admit efficient solutions.


\subsection{Minimizing Total Error in Frequency Queries}
\label{sec:lcq:fq}
%
%
When the workload matrix ${\bf W} = {\bf I}_m$ (an $m \times m$ identity matrix), the problem becomes estimating the frequencies of all private values. We choose both $\bf A$ and $\bf B$ in \eqref{opt:matrix_mechanism} to be ${\bf I}_m$. The $E$-LDP encoding algorithm $\ldpalgo_{{\bf A}, {\bf B}, {\bf s}}(x)$ introduced above, parameterized by $\bf s$, becomes a metric-based extension of the {\em histogram encoding} mechanism introduced in \cite{wang2017locally} for constructing the frequency oracle under $\eps$-LDP. And accordingly, the optimization problem \eqref{opt:matrix_mechanism} becomes a convex and solvable special case:
\begin{equation}\label{opt:matrix_mechanism:fq}
\begin{aligned}
\min_{\bf s} \quad & 2n \cdot \sum_{x=1}^m s_x^2
\\
\hbox{s.t.~} & \frac{1}{s_x}+\frac{1}{s_{x'}}\leq E(x, x'), \quad \forall x, x' \in [m]
\\
& s_x > 0, \quad \forall x \in [m].
\end{aligned}
\end{equation}

We can achieve better utility under $E$-LDP than standard $\eps$-LDP by solving the optimization problem \eqref{opt:matrix_mechanism:fq} for non-trivial metric functions. Let's look at such a family of metric functions $E_S$, parameterized by $S \subseteq [m]$, which denotes the set of {\em super sensitive values} $[m]$ (while values $[m] - S$ are less sensitive). More formally, let $E_S(x, x') = \eps$ if $x \in S$ or $x' \in S$, and $E_S(x, x') = 2\eps$ otherwise. Note that $E_S$ is a metric space by definition.
Figure \ref{fig:star_policy}(a) illustrates the above metric function, where $S$ consists all red nodes. All edges connected to those sensitive values are $\eps$ and rest of the edges are $2\eps$.
For example, in smart building, this metric function means that some locations $S$ are more sensitive (\eg, restroom, smoker lounge, \etc) than the other locations (\eg, meeting room). Offering stronger privacy guarantee to more sensitive locations can be specified using the metric $E_S$ above.


We solve the above optimization problem under $E_S$, and compare the resulting total expected squared error with the one of a standard $\eps$-LDP frequency estimation mechanism \cite{wang2017locally} which does not take care of the heterogeneous privacy requirements for different pairs of private values and has fixed expected squared error.
The numeric results are plotted in \cfig\ref{fig:star_policy} with $m = 100$ and $|S|$ varies from $1$ to $100$. It can be seen that when the number of super sensitive values is smaller (\eg, $|S| \leq 10$), $E_S$-LDP gives better utility; when $|S|$ reaches $40$, $E_S$-LDP almost has the same utility as $\eps$-LDP.


\begin{figure}[ht]
    \centering
    \begin{subfigure}[c]{0.4\textwidth}
        \centering
        \includegraphics[height=3cm]{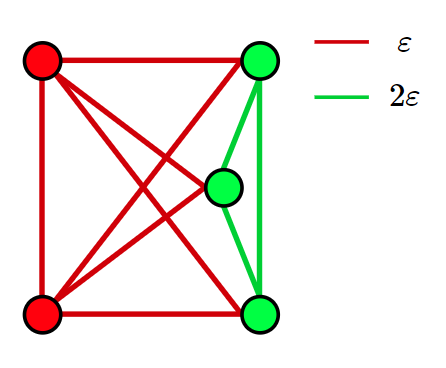}
        \caption{Illustration of metric $E_S$}
    \end{subfigure}
    ~
    \begin{subfigure}[c]{0.58\textwidth}
        \centering
        \includegraphics[height=3cm]{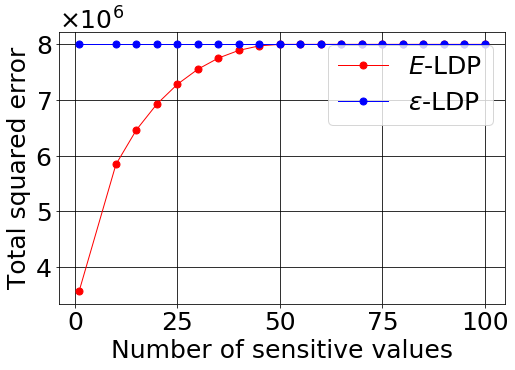}
        \caption{Varying $|S|$, $m=100$, $n=10^4$, $\eps=1.0$}
    \end{subfigure}
    
    \caption{Comparison between utility under $E_S$-LDP and under $\eps$-LDP for frequency queries}
    \label{fig:star_policy}
\end{figure}

\subsection{One-dimensional Range Queries}
\label{sec:lcq:1drq}
We consider one-dim range queries now. A range query is specified by an interval $R = [l, r] \subseteq [m]$, and asks for $\sum_{i = 1}^n \indicator{x_i \in R}$. When the distance between private values matters, it is natural to consider a metric $E$: $E(x, x') = \eps|x - x'|$ for $x, x' \in [m]$, which means values that are closer are more sensitive to each other (it is equivalent to the local version of line graph policy in Blowfish \cite{he2014blowfish} and MetricLDP \cite{alvim2018local}). For example, when $x$ is the age of a person, it is fine to release the information whether a person is an adult or a kid, but the exact year or month of birth is more sensitive.
Mechanisms are developed to handle range queries under $\eps$-LDP \cite{kulkarni2019answering, wang2019answering} (more will be discussed in Section \ref{sec:mdrq}). The hope is that we can achieve better utility under $E$-LDP by solving the problem \eqref{opt:matrix_mechanism}.

Let ${\bf W}_m$ be the workload matrix for all possible one-dimensional range queries on $[m]$.
%
%
We consider a strategy matrix $\mathbf{A}=\mathbf{L}_m$ (an $m\times m$ $\{0,1\}$-matrix with bottom-left triangular area filled with $1$), which intuitively means that each user creates an LDP report for estimating every prefix sum of the frequencies (a range query can be answered as the difference between two prefixes).
\begin{equation}
\mathbf{W}_3=
\begin{tiny}
\begin{bmatrix}
    1 & 0 & 0\\
    1 & 1 & 0\\
    1 & 1 & 1\\
    0 & 1 & 0\\
    0 & 1 & 1\\
    0 & 0 & 1
\end{bmatrix}
\end{tiny}
\quad
\mathbf{L}_3=
\begin{small}
\begin{bmatrix}
    1 & 0 & 0\\
    1 & 1 & 0\\
    1 & 1 & 1
\end{bmatrix}
\end{small}
\end{equation}
Above are examples of ${\bf W}_3$ and ${\bf L}_3$. With ${\bf A} = {\bf L}_m$, we can solve the problem \eqref{opt:matrix_mechanism} as:
\begin{equation*}
\begin{aligned}
    \min_{\bf s} \quad & 2n \sum_{x=1}^m m s_x^2 \\
    \hbox{s.t.~} & \sum_{i = l}^{r-1} \frac{1}{s_{i}} \leq \eps (r-l), ~\forall 1 \leq l < r \leq m \\
    & s_x \geq 0, ~\forall x \in [m]
\end{aligned}.
\end{equation*}
We can easily derive the optimal solution to the above problem as $s_k=\frac{1}{\eps}$ for $k \in [m-1]$ and $s_m = 0$, and thus, the total expected squared error is $\frac{2nm(m-1)}{\eps^2}$ for $m(m-1)/2$ range queries. For each range query, the squared error is $\bigohinline{\frac{n}{\eps^2}}$. This already gives an $\bigohinline{\log^2 m}$ improvement (indeed, under a relaxed privacy notation, $E$-LDP) on the utility in comparison to the mechanisms in \cite{kulkarni2019answering} and \cite{wang2019answering} (which have expected squared error $\bigohinline{\frac{n\log^2 m}{\eps^2}}$, under $\eps$-LDP).

In the next section, we will continue investigating this important workload class, range queries, with two goals: i) bounding the error per query (instead of the total error); and ii) handling $D$-dimensional range queries on $[m]^D$ with error bounds independent on the domain size $m$.


\section{Multi-dimensional Range Query and Quantile Search}
\label{sec:mdrq}
We now consider the task of answering range counting queries in a $D$-dim domain, defined in \csec\ref{sec:definition}: a {range query} is specified by a range $R = [l_1, r_1] \times \ldots \times [l_D, r_D] \subseteq [m]^D$, asking for $c(R) = \sum_{i=1}^n \indicator{x_i \in R}$. We provide $E_{\lone}$-LDP when collecting each $x_i$.


Our results can be extended for $E_{\lpnorm}$, due to the relation: $\|x\|_p \leq \|x\|_1 \leq D^{1-\frac{1}{p}} \|x\|_p$ for any $p\geq 1$.
Therefore, any algorithm that is $E_{\lone}$-LDP with parameter $\eps$ is also $E_{\lpnorm}$-LDP with parameter $D^{1-\frac{1}{p}}\eps$.

\stitle{Notations.} We say that a dimension $d$ of the range query $R=[l_1,r_1]\times\cdots\times[l_D,r_D]$ is {\em trivial} if $[l_d,r_d]=[1,m]$, and {\em nontrivial} otherwise. Let $D_R$ be the number of nontrivial dimensions of $R$.
%

For a value $x \in [m]^D$ or any vector ${\bf v}$, we use $x[i]$ or ${\bf v}[i]$ to denote the coordinate on the $i$th dimension, respectively.
%
%
We assign an {\em index} ($\ind: [m]^D \rightarrow [m^D]$) to each value in the $D$-dim domain $[m]^D$, numbering all the values in $[m]^D$ from $1$ to $m^D$: $\ind(x)=1+\sum_{d=1}^D m^{d-1}(x[d] - 1)$. If it is clear from the context, we will refer to $x$ as both a value in $[m]^D$ and its index $\ind(x)$, interchangeably.

\stitle{Comparison to existing approaches and our main results.}
Existing methods for answering range queries under $\eps$-LDP are either based on hierarchical histograms \cite{kulkarni2019answering, wang2019answering} or discrete Haar transform \cite{kulkarni2019answering}. For approaches based on hierarchical histograms, a one-dim range query can be split into $\bigohinline{\log m}$ sub-queries in a hierarchy of intervals, and each sub-query can be answered via frequency estimation under $\eps$-LDP. The expected squared error is thus $\bigohinline{\frac{n \log^2 m}{\eps^2}}$ (when $\eps$ is small). The same idea can be extended for $D$-dim domain and queries with $D_R$ nontrivial dimensions, with error $\bigohinline{\frac{n \log^{D_R+D} m}{\eps^2}}$. Approaches based on discrete Haar transform have the same asymptotic error.

Our goal here is to remove the prohibitive $(\log m)^{\bigohinline{D}}$ term from error bounds under $E_{\lone}$-LDP. Both schemes introduced above, however, rely on $(\log m)^{\bigohinline{D}}$ independent frequency estimations per query, and each frequency estimation as a black box is inherently hard with error as least $\bigomegainline{\frac{n}{\eps^2}}$ even under $E_{\lone}$-LDP (consider a domain with two possible values). And thus, {\em the $(\log m)^{\bigohinline{D}}$ term is inevitable for existing methods even under relaxation}.
In the centralized model (Blowfish privay), under the same metric $E_{\lone}$, Haney \etal \cite{haney2015design} made some improvement but failed to completely remove the $(\log m)^{\bigohinline{D}}$ term. Their approach has expected squared error $\bigohinline{\frac{D (\log m)^{3(D-1)}}{\eps^2}}$ under $E_{\lone}$-CDP, which is only better than the Privelet mechanism \cite{XiaoWG2010wavelet} under $\eps$-CDP by a $\bigthetainline{\log^{3} m}$ factor.
Haney \etal \cite{haney2015design}'s method can be extended to the local model $E_{\lone}$-LDP, with error $\bigohinline{\frac{nD (\log m)^{2(D-1)}}{\eps^2}}$, reducing the expected squared error in the methods \cite{kulkarni2019answering, wang2019answering} under $\eps$-LDP only by a factor of $\bigthetainline{\log^{2} m}$.

We propose an efficient mechanism with error bounded by $\bigohinline{n (\frac{2}{\eps^2})^{D}}$ when $\eps$ is small, independent on the domain size $m$ (\csec\ref{sec:mdrq:alg}). As $\eps$ is usually chosen to be a constant (\eg, $1$), $\frac{1}{\eps} \ll \log m$, and thus replacing $\log^{2D} m$ with $\frac{1}{\eps^{2D}}$ improves the utility significantly.
Our method can be considered as a special type of transformation similar to discrete Haar transform, but with a nice property that during the summation of frequency estimations of single values, most noise from perturbation will be canceled out.
%
%
Our method naturally extends to the case where each dimension has a different size (\csec\ref{sec:mdrq:domain}), \ie, $\domain = [m_1] \times \cdots \times[m_D]$.
When each data owner $i$ holds a (private) weight $w_i\in W$, a weighted range query asks $c_{\bf w}(R)=\sum_{i=1}^n w_i \indicator{x_i\in R}$, which can be also handled by our method (\csec\ref{sec:mdrq:weighted}),
%
%
with error $\bigohinline{{n \Delta^2 (\frac{2}{\eps^2})^{D+1}}}$, where $\Delta=\max_{w_i\in W}{|w_i|}$.
Finally, we introduce how to apply our method to find quantiles under $E_{\lone}$-LDP, and analyze the error (\csec\ref{sec:mdrq:quantile}).
It is likely that the techniques we develop can be applied in the centralized setting \cite{haney2015design} (Blowfish) to completely remove the $(\log m)^{\bigohinline{D}}$ error term there, but we will leave it as future work.

\subsection{Multi-dimensional Range Query under E-LDP}
\label{sec:mdrq:alg}
\stitle{$E_{\lone}$-LDP encoding algorithm $\ldpalgo(x)$.}
Let $x$ denote the $D$-dimensional private value held by a data owner. She first encodes each dimension $d$ of $x$, $x[d]\in[m]$, into a length-$m$ vector $\mathbf{b}_d\in \{-1,1\}^m$:
\[
\mathbf{b}_d=[\underbrace{-1,-1,\ldots,-1}_{x[d]-1},\underbrace{1,1,\ldots,1}_{m-x[d]+1}]
\]
where the first up to the $(x[d]-1)$-th position are $-1$'s and the rest are $1$'s.
She will then perturb the vector $\mathbf{b}_d$ into $\mathbf{r}_d$ with standard random-flipping operation on each position $k\in [m]$:
\[
\mathbf{r}_d[k]= \begin{dcases*}
\mathbf{b}_d[k] & with prob. $\frac{e^\eps}{e^\eps+1}$\\
-\mathbf{b}_d[k] & with prob. $\frac{1}{e^\eps+1}$
\end{dcases*}.
\]
The data owner reports the $D \times m$ matrix $\ldpalgo(x)=\mathbf{R} =\vect{\mathbf{r}_1,\mathbf{r}_2,...,\mathbf{r}_D}^\intercal$ to the data collector.
\begin{proposition}
    $\ldpalgo(x)$ is $E_{\lone}$-LDP.
\end{proposition}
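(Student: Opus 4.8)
The plan is to prove the pointwise likelihood-ratio bound $\frac{\prinline{\ldpalgo(x)=\mathbf{R}}}{\prinline{\ldpalgo(x')=\mathbf{R}}}\le e^{E_{\lone}(x,x')}$ for every possible output matrix $\mathbf{R}\in\{-1,1\}^{D\times m}$, and then sum this bound over any $S\subseteq\domainoutput$ to recover the set-level guarantee of \cdef\ref{def:blowfishldp}. Since the output space is finite, the passage from the pointwise bound to the set-level one is immediate, so the whole argument reduces to controlling a single-matrix probability ratio.

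Because each entry $\mathbf{r}_d[k]$ is flipped independently across dimensions $d$ and positions $k$, the joint likelihood factorizes, and I would write $\frac{\prinline{\ldpalgo(x)=\mathbf{R}}}{\prinline{\ldpalgo(x')=\mathbf{R}}}=\prod_{d=1}^D\prod_{k=1}^m \rho_{d,k}$, where $\rho_{d,k}$ is the ratio of the probability of observing $\mathbf{R}[d][k]$ when the underlying coordinate is $x[d]$ versus $x'[d]$. The next observation is that $\rho_{d,k}$ depends only on whether the ground-truth bits $\mathbf{b}_d[k]$ (the encoding of $x[d]$) and $\mathbf{b}'_d[k]$ (the encoding of $x'[d]$) agree: when they agree we get $\rho_{d,k}=1$, and when they disagree $\rho_{d,k}\in\{e^\eps,e^{-\eps}\}$, since a single flip keeps a bit with probability $\frac{e^\eps}{e^\eps+1}$ and reverses it with probability $\frac{1}{e^\eps+1}$, so the matched-over-mismatched ratio is exactly $e^{\pm\eps}$. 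Hence $\rho_{d,k}\le e^\eps$ always, with the bound active only at mismatched positions.

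The crux is to count, for each dimension $d$, the number of positions $k$ where $\mathbf{b}_d[k]\ne\mathbf{b}'_d[k]$. This is where the monotone step structure of the encoding does the work: $\mathbf{b}_d$ equals $-1$ on positions $1,\dots,x[d]-1$ and $+1$ on positions $x[d],\dots,m$, so two such step-vectors differ exactly on the integer positions lying strictly between $x[d]$ and $x'[d]$, of which there are precisely $|x[d]-x'[d]|$. Therefore the product over dimension $d$ is at most $e^{\eps|x[d]-x'[d]|}$, and multiplying over all dimensions yields $\prod_{d=1}^D e^{\eps|x[d]-x'[d]|}=e^{\eps\sum_d |x[d]-x'[d]|}=e^{\eps\|x-x'\|_1}=e^{E_{\lone}(x,x')}$, which is exactly the required bound.

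I do not anticipate a serious obstacle: once the independence factorization is set up, the estimate is a clean entrywise computation. The only step demanding care is the counting argument, which rests on the fact that this threshold (prefix) encoding has the property that the Hamming distance between two codewords equals the $\lone$-distance of the underlying coordinates. That identity is precisely the design feature that converts the per-flip privacy level $\eps$ into an $\lone$-metric guarantee, so I would state it explicitly rather than leave it implicit.
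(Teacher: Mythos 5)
Your proof is correct and takes essentially the same route as the paper's, which simply asserts the per-dimension factor $e^{\eps|x[d]-x'[d]|}$ and multiplies over dimensions; your independence factorization and Hamming-distance counting just make explicit the step the paper leaves implicit. One wording nit: the mismatched positions form the half-open set $\{\min(x[d],x'[d]),\dots,\max(x[d],x'[d])-1\}$ rather than the positions \emph{strictly} between the two coordinates, but the count you actually state, $|x[d]-x'[d]|$, is the correct one, so the argument stands.
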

\begin{proof}
    To show it satisfies $E_{\lone}$-LDP, we have the following for any $x,x'$:
    \[
    \frac{\Pr[\ldpalgo(x) = \mathbf{R}|x]}{\Pr[\ldpalgo(x') = \mathbf{R}|x']}\leq \prod_{i=1}^D e^{\eps|x[i]-x'[i]|} = e^{\eps\sum_{i=1}^D|x[i]-x'[i]|}=e^{E(x,x')}.
    \]
\end{proof}

\stitle{Range query estimation.}
After collecting data owners' reports $\mathbf{R}_1,...,\mathbf{R}_n$, where $\mathbf{R}_i = \ldpalgo(x_i)$, the data collector first obtains a length-$m^D$ vector $\ob = [\obs_1,\ldots,\obs_{m^D}]^\intercal$, called {\em observations}:
\begin{equation}\label{eq:observation}
    \obs_{x} = \sum_{i=1}^n \prod_{d=1}^D \mathbf{R}_i[d, x[d]], \quad \forall x\in[m]^D
\end{equation}
where $\mathbf{R}[a,b]$ denotes the value in row $a$ and column $b$ of matrix $\mathbf{R}$. 

Recall that the {index} $\ind: [m]^D \rightarrow [m^D]$ numbers all the values in $[m]^D$ from $1$ to $m^D$, namely, $\ind(x)=1+\sum_{d=1}^D m^{d-1}(x[d] - 1)$. When referring to indexes of entries in a vector, we will use $x$ and $\ind(x)$, interchangeably.
Thus, by $\obs_x$, we mean the $\ind(x)$-th position $o_{\ind(x)}$ in the vector $\ob$.

For example, if $n=2$, $D=2$ and $m=3$, with $\mathbf{R}_1=
\begin{tiny}
\begin{bmatrix*}[r]
    1 & -1 & 1\\
    -1 & -1 & -1\\
\end{bmatrix*}
\end{tiny}$ and
$\mathbf{R}_2=
\begin{tiny}
\begin{bmatrix*}[r]
    1 & 1 & -1\\
    1 & -1 & -1\\
\end{bmatrix*}
\end{tiny}$, for $x=(1,1)$, 
$\obs_x=\mathbf{R}_1[1, x[1]]\cdot \mathbf{R}_1[2, x[2]]+\mathbf{R}_2[1, x[1]]\cdot \mathbf{R}_2[2, x[2]]=1\cdot (-1) + 1\cdot 1=0$.

We will use $\ob$ to estimate the frequencies of all values in $[m]^D$. Let $\mathbf{c}=[c_1,...,c_{m^D}]^\intercal$ be the vector representing true frequencies of all values $x\in [m]^D$ among the $n$ data owners.
%
%

As will be proved in Theorem \ref{thm:mdrq_correctness}, there exists a relation 
\begin{equation} \label{eq:freq_est:relation}
\epinline{\ob} =(\frac{e^{\eps}-1}{e^{\eps}+1})^D\mathbf{B}_{m,D}\cdot \mathbf{c},    
\end{equation}
where $\mathbf{B}_{m,D}$ is an $m^D\times m^D$ matrix that can be partitioned into $m\times m$ submatrices $\mathbf{B}_{m,D-1}$, satisfying the following recursive relation for $2\leq d \leq D$,
\begin{equation}\label{eq:B_recursion}
\mathbf{B}_{m,d}=
\begin{bmatrix}
\mathbf{B}_{m,d-1} & -\mathbf{B}_{m,d-1} & \cdots  & -\mathbf{B}_{m,d-1}\\
\vdots & \ddots & \ddots & \vdots\\
\vdots &  & \ddots & -\mathbf{B}_{m,d-1}\\
\mathbf{B}_{m,d-1} & \cdots & \cdots & \mathbf{B}_{m,d-1}\\
\end{bmatrix}
\end{equation}
That is, after partition, the submatrices in the bottom-left triangle are all $\mathbf{B}_{m,d-1}$ and rest of the submatrices are all $-\mathbf{B}_{m,d-1}$. For the base case when $D=1$,
\begin{equation*}
\mathbf{B}_{m,1}=
\begin{bmatrix}
1 & -1 & \cdots  & -1\\
\vdots & \ddots & \ddots & \vdots\\
\vdots &  & \ddots & -1\\
1 & \cdots & \cdots & 1\\
\end{bmatrix}
\end{equation*}

\textbf{Estimate Single-value Frequencies.}
The estimated frequency vector 
$\rangeest=[\hat{c}_1,...,\hat{c}_{m^D}]^\intercal$ can be thus computed from \eqref{eq:freq_est:relation}  as follows:
\begin{equation}\label{eq:freq_est}
    \rangeest=
    (\frac{e^\eps+1}{e^\eps-1})^{D}\mathbf{B}^{-1}_{m,D}\cdot\ob,
\end{equation}

For any value $x\in [m]^D$, $\hat{c}_x$ is the estimated frequency of $x$.

\textbf{Estimate answers to range queries.}
For any value $x\in [m]^D$, $\hat{c}_x=\hat{c}_x$ is the frequency of $x$ estimated as \eqref{eq:freq_est}.
For a $D$-dim range query $R=[l_1,r_1] \times \cdots \times [l_D,r_D]$, the data collector can estimate its answer by directly summing up the estimated frequencies of all $x\in R$, that is,
\begin{equation}\label{eq:range_freq_est}
    \hat{c}(R)=\sum_{x\in R}\hat{c}_x =
    (\frac{e^\eps+1}{e^\eps-1})^{D}\sum_{x\in R}\mathbf{e}_x \mathbf{B}^{-1}_{m,D}\cdot\ob,
\end{equation}
where $\mathbf{e}_x$ is a $0$-$1$ row vector with only the $\ind(x)$-th entry as $1$, and $\mathbf{e}_x \mathbf{B}^{-1}_{m,D}$ gives the $\ind(x)$-th row in $\mathbf{B}^{-1}_{m,D}$.

\textbf{Computing $\mathbf{B}^{-1}_{m,D}$.}
The rest question is thus how to compute the matrix inverse $\mathbf{B}^{-1}_{m,D}$.
%
%
It turns out that we can efficiently compute it in a recursive way. 
$\mathbf{B}^{-1}_{m,D}$ can be partitioned into $m\times m$ submatrices $\mathbf{B}^{-1}_{m,D-1}$, defined by the following recursive relation for $2\leq d \leq D$:
\begin{equation}\label{eq:recursion}
\mathbf{B}_{m,d}^{-1}=
\frac{1}{2}
\begin{bmatrix}
\mathbf{B}_{m,d-1}^{-1} & 0 & \cdots  & 0 & \mathbf{B}_{m,d-1}^{-1}\\
-\mathbf{B}_{m,d-1}^{-1} & \mathbf{B}_{m,d-1}^{-1} & \ddots & \vdots & 0\\
0 & -\mathbf{B}_{m,d-1}^{-1} & \ddots & 0 & \vdots\\
\vdots & \ddots & \ddots & \mathbf{B}_{m,d-1}^{-1} & 0\\
0 & \dots & 0 & -\mathbf{B}_{m,d-1}^{-1} & \mathbf{B}_{m,d-1}^{-1}\\
\end{bmatrix}.
\end{equation}
Recursively, $\mathbf{B}_{m,d-1}^{-1}$ is a $m^{d-1}\times m^{d-1}$ matrix. In the base case, $\mathbf{B}_{m,1}^{-1}$ is the $m\times m$ matrix:
\[
\mathbf{B}_{m,1}^{-1}=
\frac{1}{2}
\begin{bmatrix}
1 & 0 & \cdots  & 0 & 1\\
-1 & 1 & \ddots & \vdots & 0\\
0 & -1 & \ddots & 0 & \vdots\\
\vdots & \ddots & \ddots & 1 & 0\\
0 & \dots & 0 & -1 & 1\\
\end{bmatrix}.
\]

We will show that the estimated answer $\hat{c}(R)$ is unbiased and bound its error in \csec\ref{sec:mdrq:analysis}. We will also present simulation results on accuracy and analyze the complexity of our solution.

\subsection{Analysis of Algorithm}
\label{sec:mdrq:analysis}
\stitle{Accuracy analysis.}\label{sec:mdrq:err}
We first show the unbiasedness of our estimations in \csec\ref{sec:mdrq:alg}.
\begin{theorem}\label{thm:mdrq_correctness}
    The estimates for the frequency of any single value and the answer to any range query $R$ (\cequs\eqref{eq:freq_est} and (\ref{eq:range_freq_est}), respectively) are unbiased, \ie, $\epinline{\hat{\mathbf{c}}}=\mathbf{c}$ and  $\epinline{\hat{c}(R)}=c(R)$.
\end{theorem}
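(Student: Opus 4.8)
The plan is to reduce both unbiasedness claims to a single computation, namely establishing the relation \eqref{eq:freq_est:relation}, from which everything else follows by linearity. First I would fix a query coordinate $y \in [m]^D$ and compute $\epinline{\obs_y}$ for the observation defined in \eqref{eq:observation}. By linearity of expectation over the $n$ independent reports, it suffices to analyze the contribution of a single data owner holding value $x$, i.e. $\epinline{\prod_{d=1}^D \mathbf{R}[d,y[d]]}$. The crucial structural observation is that the random flips are independent across dimensions and positions, so this expectation factorizes as $\prod_{d=1}^D \epinline{\mathbf{R}[d,y[d]]}$.

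Next I would evaluate each factor. For a single dimension $d$, the unperturbed entry is $\mathbf{b}_d[y[d]]$, which equals $+1$ when $y[d] \geq x[d]$ and $-1$ when $y[d] < x[d]$. Under the random-flipping rule the entry is preserved with probability $\frac{e^\eps}{e^\eps+1}$ and negated with probability $\frac{1}{e^\eps+1}$, so $\epinline{\mathbf{R}[d,y[d]]} = \frac{e^\eps-1}{e^\eps+1}\,\mathbf{b}_d[y[d]]$. Writing $\beta = \frac{e^\eps-1}{e^\eps+1}$ and letting $s(a,b) = +1$ if $b \geq a$ and $-1$ otherwise, the single-owner contribution becomes $\beta^D \prod_{d=1}^D s(x[d], y[d])$. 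Summing over owners and grouping them by value gives $\epinline{\obs_y} = \beta^D \sum_{x} c_x \prod_{d=1}^D s(x[d], y[d])$, which is exactly $\beta^D (\mathbf{B}\,\mathbf{c})[y]$ for the matrix $\mathbf{B}$ with entries $\mathbf{B}[y,x] = \prod_{d=1}^D s(x[d], y[d])$.

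It then remains to identify this combinatorial matrix with $\mathbf{B}_{m,D}$ defined recursively in \eqref{eq:B_recursion}, which I would do by induction on $D$. For the base case $D=1$, $\mathbf{B}[y,x] = s(x,y)$ places $+1$ on and below the diagonal and $-1$ strictly above it, matching $\mathbf{B}_{m,1}$. For the inductive step I would use the index ordering $\ind(x)=1+\sum_d m^{d-1}(x[d]-1)$, under which the most significant coordinate $x[D]$ selects the $m\times m$ block partition; the $(y[D],x[D])$ block has entries $s(x[D],y[D])\prod_{d<D} s(x[d],y[d]) = s(x[D],y[D])\,\mathbf{B}_{m,D-1}[y',x']$, so each block equals $+\mathbf{B}_{m,D-1}$ when $y[D]\geq x[D]$ and $-\mathbf{B}_{m,D-1}$ otherwise, which is precisely \eqref{eq:B_recursion}. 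Equivalently, $\mathbf{B}_{m,D}$ is the $D$-fold Kronecker power of $\mathbf{B}_{m,1}$. This establishes \eqref{eq:freq_est:relation}.

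Finally I would close the argument. Since $\mathbf{B}_{m,D}$ is invertible with inverse given by \eqref{eq:recursion} (which can be confirmed by the same Kronecker induction, using $(\mathbf{B}_{m,1}\otimes\mathbf{B}_{m,D-1})^{-1} = \mathbf{B}_{m,1}^{-1}\otimes\mathbf{B}_{m,D-1}^{-1}$ and checking the base case $\mathbf{B}_{m,1}\mathbf{B}_{m,1}^{-1}=\mathbf{I}$), applying $\beta^{-D}\mathbf{B}_{m,D}^{-1}$ to \eqref{eq:freq_est:relation} and using linearity of expectation yields $\epinline{\rangeest} = \beta^{-D}\mathbf{B}_{m,D}^{-1}\epinline{\ob} = \mathbf{c}$, so the frequency estimate \eqref{eq:freq_est} is unbiased. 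Unbiasedness of the range estimate \eqref{eq:range_freq_est} follows immediately, since $\hat{c}(R)=\sum_{x\in R}\hat{c}_x$ is a fixed linear combination, giving $\epinline{\hat{c}(R)} = \sum_{x\in R}\epinline{\hat{c}_x} = \sum_{x\in R} c_x = c(R)$. The main obstacle is the third step, correctly matching the factorized sign matrix to the recursive block definition of $\mathbf{B}_{m,D}$, where the interaction between the index ordering and the block partition must be tracked carefully; the expectation computation and the final linearity steps are routine.
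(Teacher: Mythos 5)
Your proposal is correct and follows essentially the same route as the paper's proof: a per-owner expectation computation yielding $\epinline{\ob}=(\frac{e^\eps-1}{e^\eps+1})^D\mathbf{B}_{m,D}\,\mathbf{c}$, followed by inversion of $\mathbf{B}_{m,D}$ and linearity for range queries. Your factorization of the expectation across dimensions by independence is equivalent to the paper's even/odd flip-counting argument ($P_{even}-P_{odd}=(1-2p)^D$), and your Kronecker-power identification of $\mathbf{B}_{m,D}$ is a clean formalization of the block recursion that the paper asserts and verifies in \clem\ref{lem:inverse}.
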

The proof of \cthm\ref{thm:mdrq_correctness} is provided in \capp\ref{app:mdrq_correctness_proof}.

According to \cequ\eqref{eq:range_freq_est}, our mechanism estimates range query by by summing up all estimations of single values' frequencies in the range $R$.
Thus, it is natural to expect this approach of range query estimation to incur an expected squared error that is $\bigohinline{m^D}$ times larger than that of a single value, since the size of the range may be as large as $\bigohinline{m^D}$. Surprisingly, however, as we will show in the following, the range query's estimation error has the same upper bound as the single-value frequency's estimation error if $D_R=D$.
In contrast, existing methods for range query estimation \cite{kulkarni2019answering, wang2019answering} all have an amplification factor of $(\log m)^{\bigohinline{D}}$ on the expected squared error in comparison to that of a single-value frequency.
We have the following theorems on the accuracy.
%
%
\begin{theorem}[Single-value frequency]\label{thm:mdrq_variance}
    For any value $x\in [m]^D$, the expected squared error of estimation $\hat{c}_x$ is
    \[\epinline{\| \hat{c}_x-c_x \|^2} = \vrinline{\hat{c}_x} = \bigoh{ (\frac{e^\eps+1}{e^\eps-1})^{2D}2^{-D}
(1-(\frac{e^\eps-1}{e^\eps+1})^{2D})n}.\]
\end{theorem}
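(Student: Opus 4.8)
The plan is to expose the Kronecker (tensor-product) structure hidden in the block recursion \eqref{eq:B_recursion} and combine it with the coordinate-wise independence of the random flips, which together decouple the $D$-dimensional variance into a product of $D$ one-dimensional moment computations. Write $\gamma=\frac{e^\eps-1}{e^\eps+1}$, so that $\hat c_x=\gamma^{-D}\mathbf{e}_x\mathbf{B}_{m,D}^{-1}\ob$ by \eqref{eq:freq_est}. Since $\ob=\sum_{i=1}^n\ob^{(i)}$ with $\obs^{(i)}_y=\prod_{d=1}^D\rmatrix_i[d,y[d]]$, and the $n$ owners report independently, I would first write $\hat c_x=\sum_{i=1}^n Z_i$ with $Z_i=\gamma^{-D}\mathbf{e}_x\mathbf{B}_{m,D}^{-1}\ob^{(i)}$ mutually independent, so that $\vrinline{\hat c_x}=\sum_{i=1}^n\vrinline{Z_i}$ and it suffices to bound the per-owner variance.

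Next I would unfold \eqref{eq:B_recursion} to observe that $\mathbf{B}_{m,D}$ is the $D$-fold Kronecker power of $\mathbf{B}_{m,1}$ (up to the index ordering fixed by $\ind$), hence $\mathbf{B}_{m,D}^{-1}=(\mathbf{B}_{m,1}^{-1})^{\otimes D}$ and the $\ind(x)$-th row of $\mathbf{B}_{m,D}^{-1}$ factorizes as the tensor product of the rows $x[1],\ldots,x[D]$ of $\mathbf{B}_{m,1}^{-1}$, each of which, by the explicit form of $\mathbf{B}_{m,1}^{-1}$, has exactly two nonzero entries equal to $\pm\tfrac12$. Because $\obs^{(i)}_y$ likewise factorizes over dimensions, the inner product collapses to a product: $Z_i=\gamma^{-D}\prod_{d=1}^D U_d$, where $U_d=\sum_{k=1}^m(\mathbf{B}_{m,1}^{-1})_{x[d],k}\,\rmatrix_i[d,k]$ involves only the flips in row $d$ of $\rmatrix_i$. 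The crucial point is that $U_1,\ldots,U_D$ are mutually independent, since distinct rows and positions of $\rmatrix_i$ are flipped independently; therefore $\vrinline{\prod_d U_d}=\prod_d\epinline{U_d^2}-\prod_d(\epinline{U_d})^2$.

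It then remains to compute the two one-dimensional moments. Using $\epinline{\rmatrix_i[d,k]}=\gamma\,\mathbf{b}_d[k]$, the identity $\rmatrix_i[d,k]^2=1$, and independence across the at most two positions appearing in $U_d$, I would obtain $\epinline{U_d}=\gamma\cdot\indicator{x_i[d]=x[d]}$ and $\epinline{U_d^2}=\tfrac12\bigl(1-\gamma^2\,\mathbf{b}_d[x[d]-1]\mathbf{b}_d[x[d]]\bigr)=\tfrac12(1\pm\gamma^2)$, where the sign is $-$ exactly when $x_i[d]=x[d]$. In particular each factor satisfies $\epinline{U_d^2}\le\tfrac12(1+\gamma^2)$, and the mean term $\prod_d(\epinline{U_d})^2$ is nonzero only for owners with $x_i=x$. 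Assembling these bounds gives $\vrinline{Z_i}\le\gamma^{-2D}\prod_d\epinline{U_d^2}$, and summing over the $n$ owners yields the stated order $\bigohinline{(\frac{e^\eps+1}{e^\eps-1})^{2D}2^{-D}(1-(\frac{e^\eps-1}{e^\eps+1})^{2D})\,n}$; one checks that the bound is exact for $D=1$ and captures the correct dependence on $\eps$ in the small-$\eps$ regime of interest.

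The main obstacle is the off-diagonal covariance of $\ob$: the entries $\obs_y$ are strongly correlated, and the coefficient row of $\mathbf{B}_{m,D}^{-1}$ has $2^D$ nonzero entries, so a direct term-by-term expansion of $\mathbf{e}_x\mathbf{B}_{m,D}^{-1}\,\vrinline{\ob}\,(\mathbf{B}_{m,D}^{-1})^\intercal\mathbf{e}_x^\intercal$ produces $4^D$ data-dependent cross terms, and a crude triangle-inequality bound leaves an uncancelled $(1+\gamma^2)^D$ factor. The device that removes this difficulty is precisely the factorization in the second and third steps: recognizing the Kronecker structure of $\mathbf{B}_{m,D}^{-1}$ together with the per-coordinate independence of the flips reduces the entire computation to $D$ independent scalar variances, so that the potentially catastrophic cross terms are handled automatically by the product rule $\vrinline{\prod_d U_d}=\prod_d\epinline{U_d^2}-\prod_d(\epinline{U_d})^2$.
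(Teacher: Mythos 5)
Your route is genuinely different from the paper's, and in one respect more careful. The paper's proof argues by induction that each row of $\mathbf{B}^{-1}_{m,D}$ has $2^D$ nonzero entries of magnitude $2^{-D}$, then passes from $\vrinline{(\mathbf{e}_x \mathbf{B}^{-1}_{m,D})\cdot\ob}$ to $2^D(\frac{1}{2})^{2D}\max_x \vrinline{\obs_x}$, and finally computes $\vrinline{\obs_x}=(1-\gamma^{2D})n$, where $\gamma=\frac{e^\eps-1}{e^\eps+1}$. The middle step silently treats the $2^D$ relevant entries of $\ob$ as uncorrelated, even though they are built from the same reports $\rmatrix_i$ and their cross-covariances do not vanish. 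Your per-owner decomposition combined with the Kronecker identification $\mathbf{B}^{-1}_{m,D}=(\mathbf{B}^{-1}_{m,1})^{\otimes D}$ (which is exactly the paper's recursion for $\mathbf{B}^{-1}_{m,d}$ read as a tensor product) turns each owner's contribution into a product of independent scalars $U_d$, so the covariances are computed exactly rather than ignored; this is a rigorous repair of precisely the step the paper glosses over, and your one-dimensional moments are correct.

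Two points need attention. First, a sign slip: since $\mathbf{b}_d[x[d]-1]\,\mathbf{b}_d[x[d]]=-1$ precisely when $x_i[d]=x[d]$, the matching case gives $\epinline{U_d^2}=\frac{1}{2}(1+\gamma^2)$, i.e., the $+$ sign, opposite to what you wrote (and for $x[d]=1$ the second nonzero coefficient sits at position $m$ with a $+$ sign, though the same dichotomy results); this is harmless because you only use $\epinline{U_d^2}\leq\frac{1}{2}(1+\gamma^2)$. Second, and more substantively: by discarding the subtracted term $\prod_d(\epinline{U_d})^2$ you end with the bound $n\gamma^{-2D}2^{-D}(1+\gamma^2)^D$, which matches the theorem's displayed expression only when $\eps=\bigohinline{1}$. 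As $\eps\to\infty$ the stated factor $1-\gamma^{2D}$ tends to $0$ while $(1+\gamma^2)^D$ tends to $2^D$, so your relaxed bound does not imply the theorem for all $\eps$ (and your remark that the bound is exact for $D=1$ holds only before this relaxation). The fix lives entirely inside your own framework: for an owner with $x_i[d]\neq x[d]$ at some coordinate, one factor is $\frac{1}{2}(1-\gamma^2)$; for an owner with $x_i=x$, keep the subtraction and use $a^D-b^D=(a-b)\sum_{k=0}^{D-1}a^{D-1-k}b^k$ with $a=\frac{1+\gamma^2}{2}$, $b=\gamma^2$, $a-b=\frac{1-\gamma^2}{2}$. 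Either way each owner contributes $\bigohinline{\gamma^{-2D}2^{-D}(1-\gamma^{2D})}$ with constants depending on $D$ --- no worse than the slack already implicit in the paper's uncorrelated treatment (whose claimed expression the exact variance in fact exceeds by a factor approaching $2^{D-1}$ as $\eps\to\infty$) --- and summing over the $n$ owners yields the theorem.
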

\begin{theorem}[Range query]\label{thm:mdrq_range_variance}
    For any range $R=[l_1,r_1]\times [l_2,r_2]\times\cdots\times[l_D,r_D]$ with $D_R$ non-trivial dimensions, the expected squared error of estimation $\hat{c}(R)$ is
    \[\epinline{\| \hat{c}(R)-c(R) \|^2} = \vrinline{\hat{c}(R)} = \bigoh{(\frac{e^\eps+1}{e^\eps-1})^{2D}2^{-D_R} (1-(\frac{e^\eps-1}{e^\eps+1})^{2D})n}.\]
\end{theorem}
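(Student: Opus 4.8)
The plan is to reduce everything to a single owner's contribution and to exploit the Kronecker (tensor) structure that the recursions \eqref{eq:B_recursion} and \eqref{eq:recursion} encode. First I would observe that these block recursions are \emph{exactly} Kronecker-product recursions, i.e. $\mathbf{B}_{m,D}=\mathbf{B}_{m,1}^{\otimes D}$ and hence $\mathbf{B}^{-1}_{m,D}=(\mathbf{B}^{-1}_{m,1})^{\otimes D}$ (the block pattern in \eqref{eq:recursion} is precisely $\mathbf{B}^{-1}_{m,1}$). Writing the estimator from \eqref{eq:range_freq_est} as $\hat c(R)=(\tfrac{e^\eps+1}{e^\eps-1})^{D}\,\mathbf{v}_R\cdot\ob$ with $\mathbf{v}_R=\sum_{x\in R}\mathbf{e}_x\mathbf{B}^{-1}_{m,D}$, the Kronecker form makes the query row factorize across dimensions: $\mathbf{v}_R=\bigotimes_{d=1}^{D}\mathbf{v}_{[l_d,r_d]}$, where $\mathbf{v}_{[l,r]}=\sum_{x=l}^{r}\mathbf{e}_x\mathbf{B}^{-1}_{m,1}$ is a one-dimensional range row.

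The conceptual heart is the next step: computing $\mathbf{v}_{[l,r]}$ explicitly. Since row $x$ of $\mathbf{B}^{-1}_{m,1}$ equals $\tfrac12(\mathbf{e}_x-\mathbf{e}_{x-1})$ for $x\geq 2$ (and $\tfrac12(\mathbf{e}_1+\mathbf{e}_m)$ for $x=1$), the sum over a range telescopes to $\mathbf{v}_{[l,r]}=\tfrac12(\mathbf{e}_r-\mathbf{e}_{l-1})$ for a nontrivial interval (with the boundary form $\tfrac12(\mathbf{e}_r+\mathbf{e}_m)$ when $l=1$, $r<m$), while a trivial interval $[1,m]$ collapses to the single vector $\mathbf{e}_m$. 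This is precisely the ``most of the noise cancels'' phenomenon advertised in \csec\ref{sec:mdrq}: taking the tensor product, $\mathbf{v}_R$ has exactly $2^{D_R}$ nonzero entries, each of magnitude $2^{-D_R}$, \emph{independent of $m$ and of the size of $R$}. This already explains the $2^{-D_R}$ factor and the removal of any $m$-dependence.

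Next I would pass to the variance. Writing $\ob=\sum_{i=1}^n\ob^{(i)}$ for the per-owner observation contributions, independence of owners gives $\vrinline{\hat c(R)}=(\tfrac{e^\eps+1}{e^\eps-1})^{2D}\sum_{i=1}^n\vrinline{\mathbf{v}_R\cdot\ob^{(i)}}$, and the factorization of $\mathbf{v}_R$ turns each summand into a product $\mathbf{v}_R\cdot\ob^{(i)}=\prod_{d=1}^{D}Y^{(i)}_d$ whose factors are independent across $d$ (the flips on distinct dimensions are independent). For a nontrivial dimension $Y^{(i)}_d=\tfrac12(\mathbf{R}_i[d,r_d]-\mathbf{R}_i[d,l_d-1])$ and for a trivial one $Y^{(i)}_d=\mathbf{R}_i[d,m]$; a short Bernoulli calculation with $\gamma:=\tfrac{e^\eps-1}{e^\eps+1}$ gives $\epinline{Y^{(i)}_d}=\gamma\,\indicator{x_i[d]\in[l_d,r_d]}$ and $\epinline{(Y^{(i)}_d)^2}=\tfrac{1+\gamma^2}{2}$ or $\tfrac{1-\gamma^2}{2}$ according to whether $x_i[d]$ lies in the interval (and $\epinline{(Y^{(i)}_d)^2}=1$ for trivial $d$). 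Using $\vrinline{\prod_d Y^{(i)}_d}=\prod_d\epinline{(Y^{(i)}_d)^2}-\prod_d(\epinline{Y^{(i)}_d})^2$ for independent factors, each owner with $x_i\in R$ contributes $(\tfrac{1+\gamma^2}{2})^{D_R}-\gamma^{2D}$, while every other owner contributes a nonnegative term at most $(\tfrac{1+\gamma^2}{2})^{D_R}$ (its subtracted term vanishes because some nontrivial coordinate is out of range). Summing the at most $n$ contributions and multiplying by $(\tfrac{e^\eps+1}{e^\eps-1})^{2D}=\gamma^{-2D}$ yields the claimed bound $\bigohinline{(\tfrac{e^\eps+1}{e^\eps-1})^{2D}2^{-D_R}(1-\gamma^{2D})n}$; as a sanity check, taking $D_R=D$ (a single value is a range that is nontrivial in every coordinate) recovers \cthm\ref{thm:mdrq_variance}.

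The step I expect to be the main obstacle is making the cancellation in the ``in-range'' term honest. The crude bound $\vrinline{\prod_d Y^{(i)}_d}\le\prod_d\epinline{(Y^{(i)}_d)^2}$ only produces the factor $(1+\gamma^2)^{D_R}$, so to land exactly on $(1-\gamma^{2D})$ one must retain the subtracted $\prod_d(\epinline{Y^{(i)}_d})^2=\gamma^{2D}$ term and show $(\tfrac{1+\gamma^2}{2})^{D_R}-\gamma^{2D}=\bigohinline{2^{-D_R}(1-\gamma^{2D})}$; this is where the dependence on $D$ and on the operating regime of $\eps$ (the paper's ``$\eps$ small'') is absorbed into the hidden constant, and it is the only genuinely nontrivial inequality in the argument. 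The secondary nuisance is bookkeeping the boundary cases $l_d=1$ uniformly, so that the ``exactly two nonzero entries'' count in the telescoping step holds for every nontrivial dimension and the per-dimension moment computation goes through unchanged.
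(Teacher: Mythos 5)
Your proposal is correct, and it reaches the theorem by a genuinely different route than the paper's proof in \capp\ref{app:mdrq_range_variance_proof}. On the structural side the two arguments are equivalent in content: the paper proves by induction on the block recursion \eqref{eq:recursion} that $\sum_{x\in R}\mathbf{e}_x\mathbf{B}^{-1}_{m,D}$ has exactly $2^{D_R}$ nonzero entries of magnitude $2^{-D_R}$ (its explicit case analysis of $\sum_{k=l_D}^{r_D}[\mathbf{B}^{-1}_{m,D}]_k$ is exactly your telescoping, with your boundary case $l_d=1$ matching its $\sum_{k=1}^{t}$ case), whereas you obtain the same fact in one stroke from $\mathbf{B}^{-1}_{m,D}=(\mathbf{B}^{-1}_{m,1})^{\otimes D}$ and the row identity $\mathbf{e}_x\mathbf{B}^{-1}_{m,1}=\tfrac12(\mathbf{e}_x-\mathbf{e}_{x-1})$; this is cleaner but not essentially new. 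The genuine divergence is the variance step. The paper passes directly to $\vrinline{\hat{c}(R)}\leq(\frac{e^\eps+1}{e^\eps-1})^{2D}2^{-D_R}\max_x\vrinline{\obs_x}$, which silently treats the entries of $\ob$ as uncorrelated --- they are not, since all positions of $\ob$ are built from the same reports $\mathbf{R}_i$ --- so the paper's proof elides a covariance issue. Your per-owner factorization $\mathbf{v}_R\cdot\ob^{(i)}=\prod_{d=1}^{D}Y^{(i)}_d$ with independent factors across dimensions computes the variance \emph{exactly} and repairs that gap; the price, which you correctly identify as the main obstacle, is the closing inequality $(\tfrac{1+\gamma^2}{2})^{D_R}-\gamma^{2D}=\bigohinline{2^{-D_R}(1-\gamma^{2D})}$ with $\gamma=\tfrac{e^\eps-1}{e^\eps+1}$, which holds only with a hidden constant depending on $D$ (via $a^{D}-b^{D}=(a-b)\sum_k a^k b^{D-1-k}$ one gets a factor like $D2^{D-1}$ as $\gamma\to 1$, and a uniform $O(1)$ constant only when $\gamma^2D=\bigohinline{1}$, i.e.\ precisely the small-$\eps$ regime the theorem targets). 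One small repair to your bookkeeping: for owners outside $R$ the crude bound $(\tfrac{1+\gamma^2}{2})^{D_R}$ is not enough as $\eps$ grows, since it carries no $(1-\gamma^{2D})$ factor; you should retain the second-moment factor $\tfrac{1-\gamma^2}{2}$ contributed by the violated nontrivial coordinate, giving at most $2^{-D_R}(1+\gamma^2)^{D_R-1}(1-\gamma^2)\leq 2^{-1}(1-\gamma^{2D})$, which has the same $D$-dependent-constant status as your in-range term. In sum, your route buys an honest treatment of the correlations that the paper's shorter proof glosses over, at the cost of confronting explicitly a $D$-dependent constant that the paper's looser step never makes visible; both arguments land on the stated bound in the regime where the theorem is applied, and your $D_R=D$ sanity check against \cthm\ref{thm:mdrq_variance} is valid.
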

When $\eps$ is small, $\vrinline{\hat{c}_x}\approx O( (\frac{2}{\eps^2})^D n )$, and $\vrinline{\hat{c}(R)}\approx O((\frac{2}{\eps})^{2D}2^{-D_R} n)$. The proof of \cthm\ref{thm:mdrq_variance} is provided in \capp\ref{app:mdrq_variance_proof}, and the proof of \cthm\ref{thm:mdrq_range_variance} is provided in \capp\ref{app:mdrq_range_variance_proof}.

Here, let's give some intuitive explanation on why the expected squared errors for both range query and single value have the same upper bound (if $D_R=D$).
%
%
%
This is from the nice property of the bias correction matrix $\mathbf{B}^{-1}_{m,D}$. More specifically, when calculating the estimation for the range query in \eqref{eq:range_freq_est},
the expected squared error of the estimation is affected by the non-zero terms in $\sum_{x\in R}\mathbf{e}_x \mathbf{B}^{-1}_{m,D}$, which is a summation of multiple rows of $\mathbf{B}^{-1}_{m,D}$ with each row corresponding to one point in $R$.
Fortunately, we can show that instead of exploding the number of non-zero terms in the summation by $\bigohinline{|R|}$, most of the terms are canceled out, leaving the number of remaining non-zero terms to be equal to that in a single-value frequency query. Therefore, the expected squared error is not amplified from single values to range queries. More details can be found in our proof in \capp\ref{app:mdrq_range_variance_proof}.

Notice that there is still a $\bigohinline{2^{D-D_R}}$ gap between the analyzed bounds for $\vrinline{\hat{c}(R)}$ and $\vrinline{\hat{c}_x}$ when $D_R<D$.
This gap can be easily removed by slightly changing our mechanism mentioned in Section \ref{sec:mdrq:alg}, which leads to $D_R=D$.
To enforce $D_R=D$ for any range query, we can extend the domain size $m$ of each dimension by adding one dummy value, \ie, changing the domain $[m]$ to $[m+1]$ for each dimension, even though no data point will lie in the extended extra space.  
Then, any range query in the original space $[m]^D$ will have $D_R=D$ since every dimension of the range is nontrivial in the extended space $[m+1]^D$, which leads to $\vrinline{\hat{c}(R)}= \vrinline{\hat{c}_x}$.


\stitle{Simulation results.}
%
We perform a simple simulation to evaluate the empirical error of our mechanism and verify our theoretical analysis (\cthms\ref{thm:mdrq_variance} and \ref{thm:mdrq_range_variance}).
We use synthetic data generated as follows. For any $D$-dimensional private value, each of its dimension follows Zipf distribution with parameter $1.1$.  We implement our mechanism in \csec\ref{sec:mdrq:alg} and measure the average squared error of frequency queries over all single values. We also randomly generate $100$ range queries and measure the average squared error of all these range queries. The mechanism (both encoding and estimation) is repeated three times.
The analyzed error bound is the one presented in Theorem \ref{thm:mdrq_variance} (with the constant set to be $1$ in the big oh, it is equal to the analytical upper bound as shown in the proof).
%
%
As we can observe from the Figure \ref{fig:mdrq}, both the empirical squared errors of single-value frequencies and range queries are below our analyzed error bound (Theorem \ref{thm:mdrq_variance}), proving the effectiveness of our mechanism.

\begin{figure}[htp]
    \begin{center}
        \begin{subfigure}[b]{0.4\textwidth}
                \includegraphics[width=\linewidth]{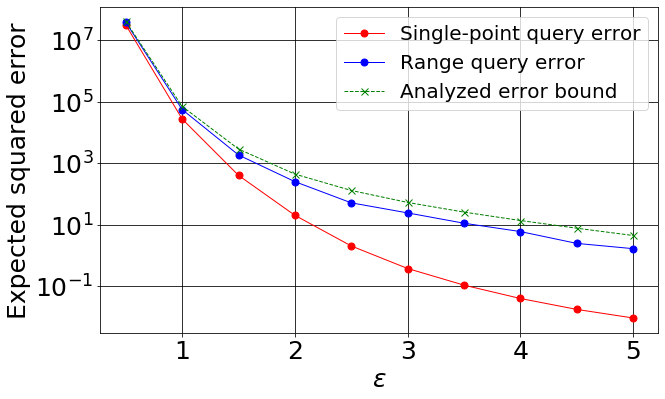}
                \caption{$D=5$, $n=1000$, and $m=10$}
                \label{fig:d=5}
        \end{subfigure}%
        \hspace{0.1\textwidth}
        \begin{subfigure}[b]{0.4\textwidth}
                \includegraphics[width=\linewidth]{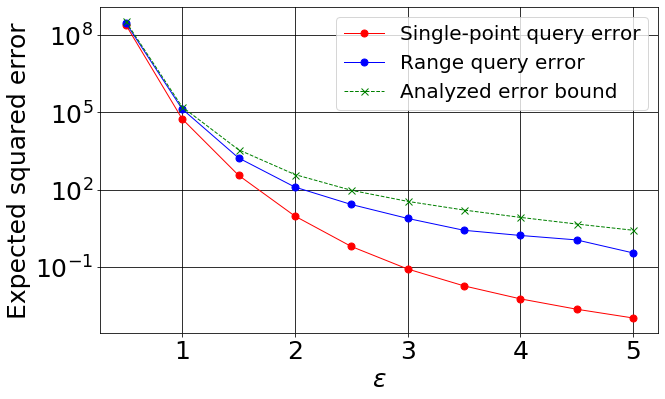}
                \caption{$D=6$, $n=1000$, and $m=10$}
                \label{fig:d=6}
        \end{subfigure}
        \caption{Squared error in estimating single-value frequencies and multi-dimensional range queries}\label{fig:mdrq}
    \end{center}
\end{figure}


\stitle{Computational and space complexity.}
Each data owner will compute and send $D$ vectors of length $m$ to the data collector, leading to computational/space complexity $\bigohinline{mD}$.

For the data collector, in order to obtain the frequency estimation vector $\rangeest=(\frac{e^\eps+1}{e^\eps-1})^{D}\mathbf{B}^{-1}_{m,D}\cdot\ob$, both $\mathbf{B}^{-1}_{m,D}$ and $\ob$ need to be computed. 
Since $\ob$ is a vector of size $m^D$ and computing each entry requires $\bigohinline{nD}$ time, the computational complexity for computing $\ob$ is $\bigohinline{Dm^Dn}$.
The computation of $\mathbf{B}^{-1}_{m,D}$ can be done via the recursion in \cequ\eqref{eq:recursion}. By the argument in the proof of Theorem \ref{thm:mdrq_variance}, each row of $\mathbf{B}^{-1}_{m,D}$ has only $2^D$ non-zero elements, and therefore computing frequency estimation for any value $x$ requires $\bigohinline{2^D}$ time. Then computing every value in the range $[m]^D$ requires $\bigohinline{2^Dm^D} = \bigohinline{(2m)^D}$ time.
There are three different implementations for the rest steps.
\begin{itemize}
    \item (store single-value frequencies): We can store frequency estimations for all values in $[m]^D$ with $\bigohinline{m^D}$ space, and answer a query $R$ by summing up $\hat{c}_x$'s for all $x\in R$ with $\bigohinline{|R|}$ time.
    \item (store all prefix sums): In addition, we can compute and store all the $\bigohinline{m^{D}}$ multi-dimensional prefix sums. A range query can be answered by combining $\bigohinline{2^D}$ answers of the prefix sums.
    \item (on-the-fly): Alternatively, we can skip the above preprocessing, and compute the answer for a range query on the fly. In this case, we need $\bigohinline{Dmn}$ to store all data owners' reports and $\bigohinline{\max(Dm^Dn, 2^D|R|)}$ time (computing $\ob$ and $|R|$ frequency estimations) to process a query.
\end{itemize}
%
%
%
%
We summarize the computational and space costs for the data collector in Table \ref{table:mdrq}.

\begin{table}[ht]
\centering
\caption{Summary of the computational and space costs of the data collector in different implementations}
\begin{tabular}[t]{lccc}
\toprule
&store single-value frequencies & store all prefix sums & on-the-fly\\
\midrule
{space cost}  & $\bigohinline{m^D}$ & $\bigohinline{m^D}$ & $\bigohinline{Dmn}$ \\
{preprocessing}  & $\bigohinline{\max(Dm^Dn,(2m)^D)}$ & $\bigohinline{\max(Dm^Dn,(2m)^D)}$ & $\bigohinline{1}$ \\
query cost & $\bigohinline{|R|}$ & $\bigohinline{2^D}$ & $\bigohinline{\max(Dm^Dn, 2^D|R|)}$ \\
\bottomrule
\end{tabular}
\label{table:mdrq}
\end{table}

\stitle{Handling continuous domains.}
In general, the input may be vectors from a real domain $[0, \Sigma]^D$. 
To apply the mechanism introduced in this section, a mapping from $[0, \Sigma]^D$ to $[m]^D$ is needed (\eg, partitioning each dimension $[0, \Sigma]$ evenly into sub-intervals and mapping each of them to a value in $[m]$).
At first glance, it is appealing to choose a larger $m$ for such a discretization process, since the {\em truncation error} (due to the rounding from $[0, \Sigma]$ to $[m]$) can be smaller as a more accurate range in $[m]$ can be used for the range query in $[0,\Sigma]$, while the error bounds in \cthms\ref{thm:mdrq_variance} and \ref{thm:mdrq_range_variance} are independent on $m$.
However, a larger $m$ means the ``real'' distance between $i$ and $i+1$ in $[m]$ is smaller in the original domain $[0, \Sigma]$, and thus a smaller $\eps$ is needed to guarantee the same level of privacy protection, resulting in a larger {\em estimation error} according to \cthms\ref{thm:mdrq_variance} and \ref{thm:mdrq_range_variance}.
%
%
Therefore, it is possible to choose an optimal value of $m$ to minimize the total error of the two types introduced above.
The optimal selection of $m$ may depend on the distribution of the input data, which is hard to be quantified, and is private, too. We leave it as an open question for future work.

\subsection{Extensions to More Complex Range Queries}
We introduce how our mechanism can be extended for more complex range queries: i) when each dimension has a different domain size; and ii) weighted range queries.

\subsubsection{When Dimension Sizes are Different}
\label{sec:mdrq:domain}
Our mechanism in Section \ref{sec:mdrq:alg} naturally extends to the case where each dimension has a different size, \ie, the private values are in domain $[m_1]\times\ldots\times[m_D]$.
We use similar notations as in \csec\ref{sec:mdrq:alg}, and we also assign an index to a value in the domain: $\ind(x)=1+\sum_{d=1}^D ((\prod_{j=1}^d m_{j-1}) (x[d]-1))$ where $m_0=1$.
If it is clear from the context, we will use $x$ to also denote its index $\ind(x)$.
%

Each data owner sends to the data collector $D$ vectors $\rmatrix=\vect{\mathbf{r}_1,...,\mathbf{r}_D}^\intercal$, where each $\mathbf{r}_d$ for dimension $d$ is a length-$m_d$ vector constructed the same way as in \csec\ref{sec:mdrq:alg}.

After collecting data owners' responses $\rmatrix_1,...,\rmatrix_n$, the data collector similarly calculates an observation vector of length $\prod_{j=1}^D m_j$ as 
\[
\obs_x = \sum_{i=1}^n \prod_{d=1}^D \rmatrix_i[d, x[d]], \quad \forall x \in [m_1]\times\ldots\times[m_D]
\]
where $\rmatrix_i[d, x[d]]$ denotes $\mathbf{r}_d[x[d]]$ for $\rmatrix_i=\vect{\mathbf{r}_1,...,\mathbf{r}_D}^\intercal$.  
Then the data collector can estimate the frequencies of all single values in $[m_1]\times\ldots\times[m_D]$ as a vector
\begin{equation*}
    \rangeest=
    (\frac{e^\eps+1}{e^\eps-1})^{D}\mathbf{C}^{-1}_{D}\cdot \ob
\end{equation*}
where $\mathbf{C}^{-1}_{D}$ is a $(\prod_{j=1}^D m_j)\times (\prod_{j=1}^D m_j)$ matrix that can be partitioned into $m_D\times m_D$ submatrices, defined by the following recursive relation for $2\leq k\leq D$:
\begin{equation*}
\mathbf{C}_{k}^{-1}=
\frac{1}{2}
\begin{bmatrix}
\mathbf{C}_{k-1}^{-1} & 0 & \cdots  & 0 & \mathbf{C}_{k-1}^{-1}\\
-\mathbf{C}_{k-1}^{-1} & \mathbf{C}_{k-1}^{-1} & \ddots & \vdots & 0\\
0 & -\mathbf{C}_{k-1}^{-1} & \ddots & 0 & \vdots\\
\vdots & \ddots & \ddots & \mathbf{C}_{k-1}^{-1} & 0\\
0 & \dots & 0 & -\mathbf{C}_{k-1}^{-1} & \mathbf{C}_{k-1}^{-1}\\
\end{bmatrix},
\quad
\mathbf{C}_{1}^{-1}=
\frac{1}{2}
\underbrace{
\begin{bmatrix}
1 & 0 & \cdots  & 0 & 1\\
-1 & 1 & \ddots & \vdots & 0\\
0 & -1 & \ddots & 0 & \vdots\\
\vdots & \ddots & \ddots & 1 & 0\\
0 & \dots & 0 & -1 & 1\\
\end{bmatrix}
}_{m_1}.
\end{equation*}
$\mathbf{C}_{k-1}^{-1}$ is a $(\prod_{j=1}^{k-1} m_j)\times (\prod_{j=1}^{k-1} m_j)$ matrix. For the base case, $\mathbf{C}_{1}^{-1}$ is the above $m_1\times m_1$ matrix.

Let $R=[l_1,r_1]\times [l_2,r_2]\times\cdots\times[l_D,r_D]$ be the range for a range query.
We can similarly obtain the answer for the range query by directly summing up the frequencies of all $x\in R$, that is,
\begin{equation}
    \hat{c}(R)=\sum_{x\in R}\hat{c}_x=
    (\frac{e^\eps+1}{e^\eps-1})^{D}\sum_{x\in R}\mathbf{e}_x\mathbf{C}^{-1}_{D}\cdot \ob
\end{equation}

The correctness proof and accuracy analysis are similar to the case with identical domain sizes. The error bounds of the estimations are identical to the bounds in \cthms\ref{thm:mdrq_variance} and \ref{thm:mdrq_range_variance}.

\subsubsection{Weighted Range Queries}
\label{sec:mdrq:weighted}
When each data owner $i$ also holds a (private) weight $w_i\in W$, a {\em weighted range query} asks $c_{\bf w}(R)=\sum_{i=1}^n w_i \indicator{x_i\in R}$.
Without loss of generality, we consider weights from the domain $W = [0, \Delta]$, where $\Delta$ is public knowledge, as we can shift other intervals with finite lengths to this domain by adding constants.
%
%
Two cases are considered below, for non-private and private weights, respectively.

\stitle{If the weights are non-private information}, results for unweighted multi-dimension range query can be easily extended to weighted multi-dimensional range query.
All data owners are partitioned into groups by their weights, and each group is formed by data owners that has identical weights.
For each group $g_w$ with weight $w$, we construct an estimator for unweighted multi-dimensional range queries $\hat{c}_{g_w}(\cdot)$.
To answer a weighted range query, we sum up the weighted answers from all groups,
\[
    \hat{c}_{\bf w}(R)=\sum_{w\in W}w\cdot \hat{c}_{g_w}(R).
\]

\stitle{If the weights are private information},
we can consider the weight as an extra private dimension for each data owner, and then use unweighted $(D+1)$-dimensional range query oracle to answer weighted $D$-dimensional range queries. More specifically, given weights ${\bf w}=[w_1, \ldots, w_n]$ for $n$ data owners, each owner $i$ first rounds her private weight $w_i$ to $\{1,2\}$: to $2$ with probability $\frac{w_i}{\Delta}$ and to $1$ with probability $\frac{\Delta-w_i}{\Delta}$. The rounded private weight is considered to be the $(D+1)$-th private dimension with domain $\{1,2\}$. The same encoding algorithm in \csec\ref{sec:mdrq:alg} is used. For the data collector, we construct an estimator $\hat{c}(\cdot)$ for unweighted $(D+1)$-dim range queries, and estimate the answer to the weighted range query with $R=[l_1,r_1]\times\cdots\times[l_D,r_D]$ as
\[
    \hat{c}_{\bf w}(R)=\Delta\cdot \hat{c}(R\times[2, 2]).
\]
The estimation is unbiased since $\epinline{\hat{c}_{\bf w}(R)}=\epinline{\Delta\cdot \hat{c}(R\times[2,2])}=\Delta\cdot \epinline{\hat{c}(R\times[2,2])}=\Delta\cdot \sum_{i\in[n]}(\frac{x_i}{\Delta}\cdot \indicator{x_i\in R})=\sum_{i\in[n]}x_i\indicator{x_i\in R}$. Error in these two cases can be bounded as follows.
\begin{theorem}\label{thm:mdrq:weighted}
    For weighted multi-dimensional range query with any range $R=[l_1,r_1]\times [l_2,r_2]\times\cdots\times[l_D,r_D]$ of dimension $D_R$, the expected squared error (variance) of estimation $\hat{c}_{\bf w}(R)$ is
    \begin{enumerate}
        \item (non-private weights) $\epinline{\| \hat{c}_{\bf w}(R)-c_{\bf w}(R) \|^2} = \bigohinline{(\frac{e^\eps+1}{e^\eps-1})^{2D}2^{-D_R} (1-(\frac{e^\eps-1}{e^\eps+1})^{2D})\Delta^2 n}$, or
        \item (private weights) $\epinline{\| \hat{c}_{\bf w}(R)-c_{\bf w}(R) \|^2} = \bigohinline{(\frac{e^\eps+1}{e^\eps-1})^{2(D+1)}2^{-(D_R+1)} (1-(\frac{e^\eps-1}{e^\eps+1})^{2(D+1)})\Delta^2 n}$.
    \end{enumerate}
\end{theorem}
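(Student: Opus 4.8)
The plan is to reduce both cases to the unweighted range-query variance bound of \cthm\ref{thm:mdrq_range_variance}, exploiting the fact that this bound is \emph{independent of the actual data values} and grows linearly in the number of reporting owners. Write $F_{D,D_R}(\eps)=(\frac{e^\eps+1}{e^\eps-1})^{2D}2^{-D_R}(1-(\frac{e^\eps-1}{e^\eps+1})^{2D})$ for the bracketed factor, so that an unweighted $D$-dim estimator built from $k$ reports has variance $\bigohinline{F_{D,D_R}(\eps)\,k}$.

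For the non-private case, I would first observe that the groups $\{g_w\}_{w\in W}$ partition the owners and that $\hat{c}_{g_w}(R)$ depends only on the reports of owners in $g_w$. Since the perturbations are independent across owners, the estimators $\{\hat{c}_{g_w}(R)\}_w$ are mutually independent, so
\[
\vrinline{\hat{c}_{\bf w}(R)}=\sum_{w\in W}w^2\,\vrinline{\hat{c}_{g_w}(R)}.
\]
By \cthm\ref{thm:mdrq_range_variance}, $\vrinline{\hat{c}_{g_w}(R)}=\bigohinline{F_{D,D_R}(\eps)\,n_w}$ with $n_w=|g_w|$; using $w\le\Delta$ and $\sum_w n_w=n$ then gives $\sum_w w^2\,\vrinline{\hat{c}_{g_w}(R)}=\bigohinline{F_{D,D_R}(\eps)\,\Delta^2 n}$, which is the stated bound.

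For the private case the key structural point is that, in the extended domain $[m]^D\times\{1,2\}$, the range $R\times[2,2]$ has exactly $D_R+1$ nontrivial dimensions, since $[2,2]\neq[1,2]$. As $\hat{c}_{\bf w}(R)=\Delta\cdot\hat{c}(R\times[2,2])$, I need $\Delta^2\,\vrinline{\hat{c}(R\times[2,2])}$, where now two independent randomness sources appear: the per-owner rounding of weights to $\{1,2\}$ and the LDP perturbation. Conditioning on the vector $\mathbf{b}=(b_1,\ldots,b_n)$ of rounded weights and using the law of total variance,
\[
\vrinline{\hat{c}(R\times[2,2])}=\epinline{\vrinline{\hat{c}(R\times[2,2])\mid\mathbf{b}}}+\vrinline{\epinline{\hat{c}(R\times[2,2])\mid\mathbf{b}}}.
\]
Given $\mathbf{b}$ the $(D+1)$-dim dataset is fixed, so the first term is the perturbation variance $\bigohinline{F_{D+1,D_R+1}(\eps)\,n}$ \emph{uniformly in} $\mathbf{b}$, and hence so is its expectation. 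By conditional unbiasedness (\cthm\ref{thm:mdrq_correctness}), $\epinline{\hat{c}(R\times[2,2])\mid\mathbf{b}}=\sum_i\indicator{x_i\in R}\indicator{b_i=2}$ is a sum of independent Bernoulli terms with variance $\sum_{i:\,x_i\in R}\frac{w_i}{\Delta}(1-\frac{w_i}{\Delta})\le n/4$. Multiplying by $\Delta^2$, the first term yields the claimed bound while the second contributes only $\bigohinline{\Delta^2 n}$, which is absorbed in the small-$\eps$ regime of interest where $F_{D+1,D_R+1}(\eps)\gg 1$.

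I expect the main obstacle to be the private case: cleanly separating the rounding randomness from the perturbation randomness, and justifying that \cthm\ref{thm:mdrq_range_variance} applies conditionally and uniformly over $\mathbf{b}$ so that the conditional variance can be pulled out of the outer expectation, as well as verifying that the extra $\bigohinline{\Delta^2 n}$ rounding term is dominated by the main bound. The non-private case is a routine weighted-independence computation once disjointness of the groups is noted.
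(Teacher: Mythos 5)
Your non-private case is exactly the paper's computation: independence of the $\hat{c}_{g_w}(R)$ across the disjoint groups, \cthm\ref{thm:mdrq_range_variance} applied per group with $n$ replaced by $|g_w|$, then $w\leq\Delta$ and $\sum_w |g_w|=n$. In the private case, however, you take a genuinely different route. The paper never separates the two randomness sources: it folds the rounding into the per-report $\pm 1$ analysis, observing that the weight encoding satisfies $\mathbf{b}_{D+1}[2]=1$ regardless of the rounded value, so the observation $\obs_{[x~2]}$ has exactly the unweighted distributional structure in dimension $D+1$; it computes $\vrinline{\obs_{[x~2]}}=(1-(\frac{e^\eps-1}{e^\eps+1})^{2(D+1)})n$ directly and then reruns the proof of \cthm\ref{thm:mdrq_range_variance} with $(D, D_R)$ replaced by $(D+1, D_R+1)$. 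Your law-of-total-variance decomposition instead conditions on the rounded weights $\mathbf{b}$, applies \cthm\ref{thm:mdrq_range_variance} conditionally --- which is valid, since its bound is data-independent (hence uniform in $\mathbf{b}$), the extension to per-dimension sizes $(m,\ldots,m,2)$ in \csec\ref{sec:mdrq:domain} carries the same bound, and $R\times[2,2]$ indeed has $D_R+1$ nontrivial dimensions --- and surfaces an explicit rounding term $\vrinline{\epinline{\hat{c}(R\times[2,2])\mid\mathbf{b}}}=\sum_{i:\,x_i\in R}\frac{w_i}{\Delta}(1-\frac{w_i}{\Delta})\leq n/4$. What your route buys is honesty about this term: the paper's final step bounds the variance via $\max_{x\in R}\vrinline{\obs_{[x~2]}}$ only, yet the nonzero entries of $\sum_{x\in R\times[2,2]}\mathbf{e}_x\mathbf{B}^{-1}$ also hit observation coordinates whose weight-index is $1$, and those coordinates are exactly where the rounding noise lives; the paper's proof silently omits them, which is harmless only because in the small-$\eps$ regime all these observation variances are $\Theta(n)$ anyway. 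What your route costs is the explicit proviso that $\eps$ be bounded so that your $F_{D+1,D_R+1}(\eps)=\Omega(1)$ absorbs the extra $\bigohinline{\Delta^2 n}$ --- but that proviso is inherent to the statement rather than to your method: the rounding variance does not decay as $\eps$ grows, while the stated bound's prefactor tends to $0$ as $\eps\to\infty$, so for worst-case data (all $w_i=\Delta/2$ and $x_i\in R$) the claimed bound can only hold in the bounded-$\eps$ regime you identify, consistent with the paper's standing ``$\eps$ is small'' assumption.
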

The proof of \cthm\ref{thm:mdrq:weighted} is in \capp\ref{app:thm:mdrq:weighted}. The same technique of extending each dimension by a dummy element mentioned in \csec\ref{sec:mdrq:err} can be applied to replace $D_R$ with $D$ in the bounds.

\subsection{Application: Quantile Queries}
\label{sec:mdrq:quantile}
Consider quantile queries in a one-dim domain $\domain = [m]$. 
The {\em percentile} of a value $x$ in $\data = \{x_i\}_{i \in [n]}$ is $\sigma(x) = \frac{1}{n} \sum_{i=1}^n \indicator{x_i \leq x}$, which calculates the fraction of values that are no larger than $x$ in $X$. 
The interval $I(x)=(\sigma(x-1),\sigma(x)] \subseteq [0, 1]$ is said to be the {\em percentile interval} of $x$.
The {\em $p$-quantile} of $X$ is defined to be the value $x^*$, such that $\sigma(x^*-1) < p \leq \sigma(x^*)$, \ie, $p$ is in $x^*$'s percentile interval. Note that median estimation is a special case of the quantile query when $p=0.5$.

Let $\hat{x}^*$ be an estimated $p$-quantile. The goal is to make sure that $p$ is close to $I({\hat{x}^*})$ (or, the percentile of $\hat{x}^*$ is close to $p$). We define the error of the estimation $\hat{x}^*$ to be 
\[
\Err[\hat{x}^*]=\inf_{\hat{p} \in I({\hat{x}^*})} |\hat{p} - p|.  
\]
We want to bound error $\Err[\hat{x}^*]$ with high probability.
Note that our error definition is essentially the $\epsilon$-approximate $p$-quantile in literature, \eg, \cite{manku1998approximate} (with $\inf$ considered here as $\data$ is a multiset).
%
%

\stitle{Answering quantile queries under $E_{\lone}$-LDP.}
We consider the metric $E_{\lone}$ defined for range queries.
Our mechanism of quantile query estimation basically follows the approach proposed in Section 4.7 of \cite{kulkarni2019answering}, which uses one-dimensional range query mechanism as a primitive and perform binary search to estimate the $p$-quantile. Our main contribution here is to provide formal analysis on the utility of the mechanism, and compare the mechanisms under $\eps$-LDP and $E_{\lone}$-LDP.

For data owners, private values are encoded using the algorithm (its $1$-dim case) in \csec\ref{sec:mdrq:alg} to guarantee $E_{\lone}$.
For the data collector, let $\hat{c}([l,r])$ be the frequency of range $[l,r]$ estimated using the mechanism introduced in \csec\ref{sec:mdrq:alg} for answering one-dim range queries. We can then estimate the percentile of value $x$ as $\hat{\sigma}(x)=\hat{c}([1, x])/n$.
Our mechanism answers a $p$-quantile query as follows.
\begin{enumerate}
    \item Construct an oracle (\csec\ref{sec:mdrq:alg}) for answering $1$-dim range queries on data owners' reports.
    \item Perform binary search on the input data domain $[m]$ until a value $\hat{x}^*$ s.t. $\hat{\sigma}(\hat{x}^*-1)< p \leq \hat{\sigma}(\hat{x}^*)$ is found, with $\hat{\sigma}$ defined above. More specifically, initially let $L=1, R=m$ and $M=\lceil(L+R)/2\rceil$. If $\hat{\sigma}(M)<p$, then let $L=M$, otherwise let $R=M$. Let $M=\lceil(L+R)/2\rceil$ and repeat the above procedure until find the value $\hat{x}^*$ (if $R-L \leq 10$, we perform a linear-scan search for $\hat{x}^*$).
    \item Output $\hat{x}^*$ as the estimation for the quantile query.
\end{enumerate}

\stitle{Accuracy analysis.}
We show that, with high probability, the estimation error $\Err[\hat{x}^*]$ is bounded.
\begin{lemma}\label{lem:qq}
    With probability at least $1-\delta$, the error of estimated percentile is bounded
    \[
        |\hat{\sigma}(x)-\sigma(x)|\leq \frac{e^\eps+1}{e^\eps-1}\cdot \sqrt{\frac{2}{n}\log \frac{1}{\delta}}, \quad \hbox{for any $x \in [m]$}.
    \]
\end{lemma}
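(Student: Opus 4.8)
The plan is to recognize that, although $\hat{c}([1,x])$ is defined through the bias-correction matrix $\mathbf{B}^{-1}_{m,1}$, the quantity $\hat{\sigma}(x)-\sigma(x)$ is really a normalized deviation of a sum of $n$ independent, bounded per-owner contributions from its mean, which is exactly the shape that Hoeffding's inequality converts into a $\sqrt{\log(1/\delta)}$ tail bound. First I would write $\hat{\sigma}(x)-\sigma(x)=\frac{1}{n}\bigl(\hat{c}([1,x])-c([1,x])\bigr)$ and invoke \cthm\ref{thm:mdrq_correctness} (specialized to $D=1$) to get $\epinline{\hat{c}([1,x])}=c([1,x])$, so that this difference has mean zero. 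The reason a genuine tail bound is needed, rather than the variance bound of \cthm\ref{thm:mdrq_variance} combined with Chebyshev, is that the target error decays like $\sqrt{(1/n)\log(1/\delta)}$, and only an exponential concentration inequality produces the logarithmic dependence on $1/\delta$.

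The key reduction is to make the per-owner summands explicit. Using the base-case form of $\mathbf{B}^{-1}_{m,1}$, the row combination that appears in \eqref{eq:range_freq_est} telescopes to $\sum_{x'=1}^{x}\mathbf{e}_{x'}\mathbf{B}^{-1}_{m,1}=\tfrac12(\mathbf{e}_x+\mathbf{e}_m)$, so that $\hat{c}([1,x])=\frac{e^{\eps}+1}{e^{\eps}-1}\cdot\tfrac12(\obs_x+\obs_m)$. Substituting $\obs_x=\sum_{i=1}^n \mathbf{R}_i[1,x]$ and $\obs_m=\sum_{i=1}^n \mathbf{R}_i[1,m]$ gives $\hat{c}([1,x])=\frac{e^{\eps}+1}{e^{\eps}-1}\sum_{i=1}^n W_i$ with $W_i=\tfrac12\bigl(\mathbf{R}_i[1,x]+\mathbf{R}_i[1,m]\bigr)$. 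Since every entry of $\mathbf{R}_i$ lies in $\{-1,+1\}$, each $W_i\in[-1,1]$, and the $W_i$ are independent across data owners because the reports are generated independently. Hence $\hat{\sigma}(x)-\sigma(x)=\frac{e^{\eps}+1}{e^{\eps}-1}\cdot\frac{1}{n}\sum_{i=1}^n\bigl(W_i-\epinline{W_i}\bigr)$, a scaled average of independent, bounded, centered random variables.

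Then I would apply Hoeffding's inequality to $\frac{1}{n}\sum_i W_i$: with each $W_i$ ranging over an interval of width at most $2$, $\prinline{\bigl|\frac{1}{n}\sum_i(W_i-\epinline{W_i})\bigr|\ge\lambda}\le 2\exp(-n\lambda^2/2)$. Setting the right-hand side to $\delta$ (absorbing the harmless factor $2$) yields $\lambda=\sqrt{\tfrac{2}{n}\log\tfrac{1}{\delta}}$, and multiplying through by $\frac{e^{\eps}+1}{e^{\eps}-1}$ gives precisely the claimed bound on $|\hat{\sigma}(x)-\sigma(x)|$ for the fixed $x$; the identical argument applies verbatim to each $x\in[m]$. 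The main obstacle is the reduction in the second paragraph: one must see that the matrix-inverse estimator collapses into a sum of bounded, independent, per-owner terms, since it is this structure, not the variance computation, that unlocks the exponential tail and the resulting $\sqrt{\log(1/\delta)}$ accuracy guarantee.
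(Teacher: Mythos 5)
Your proof is correct and follows essentially the same route as the paper's: telescope the rows of $\mathbf{B}^{-1}_{m,1}$ so that $\hat{c}([1,x])$ collapses to $\frac{e^\eps+1}{e^\eps-1}\cdot\frac{\obs_x+\obs_m}{2}$, view this as a sum of $n$ independent per-owner terms bounded in $[-1,1]$, and apply the Chernoff--Hoeffding bound with the tail probability set to $\delta$; your explicit $W_i$ decomposition also correctly handles the correlation between $\obs_x$ and $\obs_m$, and your appeal to \cthm\ref{thm:mdrq_correctness} for unbiasedness makes explicit the step the paper uses implicitly to replace $\epinline{\hat{\sigma}(x)}$ by $\sigma(x)$. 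In fact your telescoped expression $\frac{1}{2}(\obs_x+\obs_m)$ is the correct one --- the paper's proof writes $(\obs_1+\obs_r)/2$, an index typo --- and your ``absorbing the harmless factor $2$'' mirrors the same constant slack (strictly $\log\frac{2}{\delta}$ rather than $\log\frac{1}{\delta}$ for the two-sided bound) already present in the paper's own application of the inequality.
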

\begin{theorem}\label{thm:qq}
    With probability at least $1-\delta$, our quantile query mechanism guarantees that
    \[
        \Err[\hat{x}^*]\leq \frac{2(e^\eps+1)}{e^\eps-1} \cdot \sqrt{\frac{2}{n}\log \frac{2\log m}{\delta}}, \quad \hbox{for an estimated $p$-quantile $\hat{x}^*$}.
    \]
\end{theorem}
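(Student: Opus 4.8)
The plan is to reduce the theorem to the per-point guarantee of \clem\ref{lem:qq} via a union bound over the values probed by the binary search, and then to turn the algorithm's stopping condition into a deterministic bound on $\Err[\hat{x}^*]$. First I would observe that the only randomness $\hat{x}^*$ depends on is the collection of noisy percentile estimates $\hat\sigma(\cdot)$, and that a single run of the search evaluates $\hat\sigma$ at most $2\log m$ distinct points (at most $\lceil\log_2 m\rceil$ midpoints, plus the $O(1)$ values scanned once the window has shrunk below $10$, including the adjacent pair $\hat{x}^*-1,\hat{x}^*$). Setting the per-point failure probability in \clem\ref{lem:qq} to $\delta/(2\log m)$ and taking a union bound, I get that with probability at least $1-\delta$ the event
\[
\mathcal{E}=\Big\{\,|\hat\sigma(x)-\sigma(x)|\le \gamma \text{ for every probed } x\,\Big\},\qquad \gamma=\frac{e^\eps+1}{e^\eps-1}\sqrt{\frac{2}{n}\log\frac{2\log m}{\delta}},
\]
holds, which already produces the logarithmic term appearing in the claimed bound.

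Conditioning on $\mathcal{E}$, the remaining argument is deterministic. The search returns $\hat{x}^*$ with $\hat\sigma(\hat{x}^*-1)<p\le\hat\sigma(\hat{x}^*)$, and both $\hat{x}^*-1$ and $\hat{x}^*$ are probed, so $|\hat\sigma(\hat{x}^*)-\sigma(\hat{x}^*)|\le\gamma$ and $|\hat\sigma(\hat{x}^*-1)-\sigma(\hat{x}^*-1)|\le\gamma$. Combining these with the stopping inequalities sandwiches the target quantile as $\sigma(\hat{x}^*-1)-\gamma< p\le \sigma(\hat{x}^*)+\gamma$, i.e.\ $p$ lies within $\gamma$ of the true percentile interval $I(\hat{x}^*)=(\sigma(\hat{x}^*-1),\sigma(\hat{x}^*)]$. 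Hence the distance from $p$ to $I(\hat{x}^*)$, which is exactly $\Err[\hat{x}^*]$, is controlled by $\gamma$; absorbing the slack from having to handle both one-sided comparisons (and the possibility that the search only brackets, rather than exactly lands on, the crossing) into a single constant yields the stated $\Err[\hat{x}^*]\le 2\gamma$.

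The main obstacle I anticipate is making the union bound fully rigorous, because the set of probed points is itself data-dependent: which midpoints the search visits is determined by the same noisy estimates $\hat\sigma(\cdot)$ whose accuracy I am trying to control, and these estimates are correlated across points (they are all linear functions of the common observation vector $\ob$). The clean phrasing ``take a union bound over the $2\log m$ probed points'' implicitly treats the probed set as fixed; a careful treatment must either union bound over all $\Theta(m)$ candidate midpoints of the search tree (which is safe but weakens the logarithmic factor to $\log(m/\delta)$), or argue that an inaccurate estimate can only divert the search along the fixed balanced-search-tree structure and bound the failure probability along the single realized root-to-leaf path. A secondary subtlety is the non-monotonicity of $\hat\sigma$: unlike $\sigma$, the noisy $\hat\sigma$ need not be increasing, so a value satisfying the exact crossing condition may fail to exist. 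The linear scan over the final window of size at most $10$ is what guarantees that such a $\hat{x}^*$, together with its accurate neighbour $\hat{x}^*-1$, is actually produced, and the deterministic step above must then be applied to the pair that is returned.
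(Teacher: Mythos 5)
Your proposal follows essentially the same route as the paper's own proof: apply \clem\ref{lem:qq} with per-point failure probability $\delta/(2\log m)$, union bound over the at most $2\log m$ values the search probes, and then deterministically sandwich $p$ within $\Delta$ of the interval $I(\hat{x}^*)$ using the stopping condition $\hat\sigma(\hat{x}^*-1)<p\leq\hat\sigma(\hat{x}^*)$ (the paper phrases this contrapositively, showing values with $\sigma(x)+\Delta<p$ or $p\leq\sigma(x-1)-\Delta$ cannot be chosen, and maintains the loop invariant $\hat\sigma(L)<p\leq\hat\sigma(R)$ to guarantee a crossing exists). The data-dependence of the probed set that you flag is a genuine subtlety, but the paper's proof makes exactly the same implicit fixed-set union bound, so your argument is no weaker than the published one.
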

The proofs of \clem\ref{lem:qq} and \cthm\ref{thm:qq} are in \capp\ref{app:lem:qq} and \capp\ref{app:thm:qq}, respectively.
    
\stitle{Comparison with the mechanism under $\eps$-LDP.} From \cthm\ref{thm:qq}, the estimation error of our mechanism is bounded by $\bigohinline{\frac{1}{\eps\sqrt{n}}\sqrt{\log\log m}}$ with high probability. 
The state-of-the-art $\eps$-LDP mechanisms \cite{kulkarni2019answering, wang2019answering} for one-dim range queries can be plugged in step 1 (as suggested in \cite{kulkarni2019answering}). Since these primitives have error bounded by $\bigohinline{\frac{1}{\eps\sqrt{n}}\log m}$ with high probability for estimating percentiles, following the same argument, the error of the $p$-quantile estimation is $\bigohinline{\frac{1}{\eps\sqrt{n}}\log m\sqrt{\log\log m}}$, which is $\bigohinline{\log m}$ times larger compared to our mechanism under $E_{\lone}$-LDP.


\begin{table}[ht]
\begin{center}
\caption{Summary of error in $E_{\lone}$-LDP algorithms and $\eps$-LDP algorithms for different tasks ($\eps$ is small)}
\begin{tabular}{l l l l}
\toprule
& 1-dim range query & Multi-dim range query & Quantile query
\\
& (expected squared error) & (expected squared error) & (error w.h.p.)
\\ \midrule
\cite{kulkarni2019answering, wang2019answering} ($\eps$-LDP) & $\bigohinline{\frac{n \log m}{\eps^2}}$ & $\bigohinline{\frac{n (\log m)^{D+D_R} }{\eps^2}}$ & $\bigohinline{\frac{\log m}{\eps\sqrt{n}}\sqrt{\log\log m}}$
\\ 
Extending \cite{haney2015design} ($E_{\lone}$-LDP) & $\bigohinline{\frac{n}{\eps^2}}$ & $\bigohinline{\frac{nD (\log m)^{2(D-1)}}{\eps^2}}$ & ?
\\ 
This work ($E_{\lone}$-LDP) & $\bigohinline{\frac{n}{\eps^2}}$ & $\bigohinline{n(\frac{2}{\eps})^{2D}2^{-D_R}}$ & $\bigohinline{\frac{1}{\eps\sqrt{n}}\sqrt{\log\log m}}$
\\ \bottomrule
\end{tabular}
\label{table:summary_accuracy}
\end{center}
\end{table}

\section{Conclusion}
This paper investigates local differential privacy on metric spaces (or $E$-LDP), which is a relaxation of $\eps$-LDP to customize the levels of indistinguishability among different pairs of values using a metric function $E$.
In this work, we design a generic $E$-LDP mechanism (generalizing matrix mechanisms in CDP) to trade-off privacy for utility of linear counting queries. For multi-dimensional range queries, we introduce a novel $E$-LDP algorithm under $\lone$-metric with an error which is independent on the size $m$ of each dimension. This technique can also help reduce the error of $\eps$-LDP algorithms for quantile queries by a factor of $\log m$ under $E$-LDP. 
Our techniques apply to $\lpnorm$-LDP as well.
As future work, we would apply techniques in this paper as primitives for other analytical tasks; we would also expect the transform matrices developed in this paper to be used to improve algorithms in the centralized setting under similar relaxations (\eg, in Blowfish privacy).

\bibliographystyle{plain}
\bibliography{references}

\begin{thebibliography}{10}

\bibitem{aistats:AcharyaSZ19}
Jayadev Acharya, Ziteng Sun, and Huanyu Zhang.
\newblock Hadamard response: Estimating distributions privately, efficiently,
  and with little communication.
\newblock In {\em Proceedings of the 22nd International Conference on
  Artificial Intelligence and Statistics ({AISTATS})}, pages 1120--1129, 2019.

\bibitem{alvim2018local}
M{\'a}rio Alvim, Konstantinos Chatzikokolakis, Catuscia Palamidessi, and Anna
  Pazii.
\newblock Local differential privacy on metric spaces: optimizing the trade-off
  with utility.
\newblock In {\em Proceedings of the 31st IEEE Computer Security Foundations
  Symposium (CSF)}, pages 262--267, 2018.

\bibitem{andres2013geo}
Miguel~E. Andr{\'e}s, Nicol{\'a}s~E. Bordenabe, Konstantinos Chatzikokolakis,
  and Catuscia Palamidessi.
\newblock Geo-indistinguishability: Differential privacy for location-based
  systems.
\newblock In {\em Proceedings of the 2013 ACM Conference on Computer and
  Communications Security (CCS)}, pages 901--914, 2013.

\bibitem{Balle2019blanket}
Borja Balle, James Bell, Adri{\`{a}} Gasc{\'{o}}n, and Kobbi Nissim.
\newblock The privacy blanket of the shuffle model.
\newblock In {\em Proceedings of the 39th Annual International Cryptology
  Conference (CRYPTO)}, pages 638--667, 2019.

\bibitem{nips:BassilyNST17}
Raef Bassily, Kobbi Nissim, Uri Stemmer, and Abhradeep~Guha Thakurta.
\newblock Practical locally private heavy hitters.
\newblock In {\em Advances in Neural Information Processing Systems ({NIPS})},
  pages 2285--2293, 2017.

\bibitem{stoc:BassilyS15}
Raef Bassily and Adam~D. Smith.
\newblock Local, private, efficient protocols for succinct histograms.
\newblock In {\em Proceedings of the 47th {ACM} Symposium on Theory of
  Computing ({STOC})}, pages 127--135, 2015.

\bibitem{prochlo17}
Andrea Bittau, \'{U}lfar Erlingsson, Petros Maniatis, Ilya Mironov, Ananth
  Raghunathan, David Lie, Mitch Rudominer, Ushasree Kode, Julien Tinnes, and
  Bernhard Seefeld.
\newblock Prochlo: Strong privacy for analytics in the crowd.
\newblock In {\em Proceedings of the 26th Symposium on Operating Systems
  Principles (SOSP)}, pages 441--459, 2017.

\bibitem{boucheron2013concentration}
St{\'e}phane Boucheron, G{\'a}bor Lugosi, and Pascal Massart.
\newblock {\em Concentration inequalities: A nonasymptotic theory of
  independence}.
\newblock Oxford university press, 2013.

\bibitem{Chan:2012:OLB:2404160.2404185}
T-H.~Hubert Chan, Elaine Shi, and Dawn Song.
\newblock Optimal lower bound for differentially private multi-party
  aggregation.
\newblock In {\em Proceedings of the 20th Annual European Conference on
  Algorithms (ESA)}, pages 277--288, 2012.

\bibitem{chatzikokolakis2013broadening}
Konstantinos Chatzikokolakis, Miguel~E. Andr{\'{e}}s, Nicol{\'{a}}s~E.
  Bordenabe, and Catuscia Palamidessi.
\newblock Broadening the scope of differential privacy using metrics.
\newblock In {\em Proceedings of the 13th International Symposium on Privacy
  Enhancing Technologies (PETS)}, pages 82--102, 2013.

\bibitem{Chatzikokolakis2017ldputility}
Konstantinos Chatzikokolakis, Ehab ElSalamouny, and Catuscia Palamidessi.
\newblock Efficient utility improvement for location privacy.
\newblock {\em PoPETs}, 2017(4):308--328, 2017.

\bibitem{ChaudhuriIM2019cbdp}
Kamalika Chaudhuri, Jacob Imola, and Ashwin Machanavajjhala.
\newblock Capacity bounded differential privacy.
\newblock {\em CoRR}, abs/1907.02159, 2019.

\bibitem{mixnets18}
Albert Cheu, Adam~D. Smith, Jonathan Ullman, David Zeber, and Maxim Zhilyaev.
\newblock Distributed differential privacy via shuffling.
\newblock In {\em {EUROCRYPT}}, pages 375--403, 2019.

\bibitem{kulkarni2019answering}
Graham Cormode, Tejas Kulkarni, and Divesh Srivastava.
\newblock Answering range queries under local differential privacy.
\newblock {\em {PVLDB}}, 12(10):1126--1138, 2019.

\bibitem{desfontaines2019sok}
Damien Desfontaines and Bal{\'{a}}zs Pej{\'{o}}.
\newblock Sok: Differential privacies.
\newblock {\em CoRR}, abs/1906.01337, 2019.

\bibitem{ding2017collecting}
Bolin Ding, Janardhan Kulkarni, and Sergey Yekhanin.
\newblock Collecting telemetry data privately.
\newblock In {\em Advances in Neural Information Processing Systems (NIPS)},
  pages 3571--3580, 2017.

\bibitem{duchi2013local}
John~C Duchi, Michael~I Jordan, and Martin~J Wainwright.
\newblock Local privacy and statistical minimax rates.
\newblock In {\em Proceedings of the 54th Annual {IEEE} Symposium on
  Foundations of Computer Science (FOCS)}, pages 429--438, 2013.

\bibitem{duchi2018minimax}
John~C Duchi, Michael~I Jordan, and Martin~J Wainwright.
\newblock Minimax optimal procedures for locally private estimation.
\newblock {\em Journal of the American Statistical Association},
  113(521):182--201, 2018.

\bibitem{DuchiWJ13:localsharp}
John~C. Duchi, Martin~J. Wainwright, and Michael~I. Jordan.
\newblock Local privacy and minimax bounds: Sharp rates for probability
  estimation.
\newblock In {\em Advances in Neural Information Processing Systems (NIPS)},
  pages 1529--1537, 2013.

\bibitem{dwork2006differential}
Cynthia Dwork.
\newblock Differential privacy.
\newblock In {\em Proceedings of the 33rd International Conference on Automata,
  Languages and Programming (ICALP)}, pages 1--12, 2006.

\bibitem{amplification18}
{\'{U}}lfar Erlingsson, Vitaly Feldman, Ilya Mironov, Ananth Raghunathan, Kunal
  Talwar, and Abhradeep Thakurta.
\newblock Amplification by shuffling: From local to central differential
  privacy via anonymity.
\newblock In {\em Proceedings of the 30th Annual {ACM-SIAM} Symposium on
  Discrete Algorithms (SODA)}, pages 2468--2479, 2019.

\bibitem{erlingsson2014rappor}
{\'U}lfar Erlingsson, Vasyl Pihur, and Aleksandra Korolova.
\newblock Rappor: Randomized aggregatable privacy-preserving ordinal response.
\newblock In {\em Proceedings of the 2014 ACM Conference on Computer and
  Communications Security (CCS)}, pages 1054--1067, 2014.

\bibitem{FawazFS15ldplocation}
Kassem Fawaz, Huan Feng, and Kang~G. Shin.
\newblock Anatomization and protection of mobile apps' location privacy
  threats.
\newblock In {\em Proceedings of 24th {USENIX} Security Symposium}, pages
  753--768, 2015.

\bibitem{FawazS2014ldpsmartphone}
Kassem Fawaz and Kang~G. Shin.
\newblock Location privacy protection for smartphone users.
\newblock In {\em Proceedings of the 2014 ACM Conference on Computer and
  Communications Security (CCS)}, pages 239--250, 2014.

\bibitem{gursoy2019secure}
Mehmet~Emre Gursoy, Acar Tamersoy, Stacey Truex, Wenqi Wei, and Ling Liu.
\newblock Secure and utility-aware data collection with condensed local
  differential privacy.
\newblock {\em arXiv preprint arXiv:1905.06361}, 2019.

\bibitem{haney2015design}
Samuel Haney, Ashwin Machanavajjhala, and Bolin Ding.
\newblock Design of policy-aware differentially private algorithms.
\newblock {\em PVLDB}, 9(4):264--275, 2015.

\bibitem{he2014blowfish}
Xi~He, Ashwin Machanavajjhala, and Bolin Ding.
\newblock Blowfish privacy: Tuning privacy-utility trade-offs using policies.
\newblock In {\em Proceedings of the 2014 International Conference on
  Management of Data (SIGMOD)}, pages 1447--1458, 2014.

\bibitem{pvldb:johnson2017practical}
Noah~M. Johnson, Joseph~P. Near, and Dawn Song.
\newblock Towards practical differential privacy for {SQL} queries.
\newblock {\em {PVLDB}}, 11(5):526--539, 2018.

\bibitem{kifer2012rigorous}
Daniel Kifer and Ashwin Machanavajjhala.
\newblock A rigorous and customizable framework for privacy.
\newblock In {\em Proceedings of the 31st {ACM} {SIGMOD-SIGACT-SIGART}
  Symposium on Principles of Database Systems (PODS)}, pages 77--88, 2012.

\bibitem{KiferM2014pufferfish}
Daniel Kifer and Ashwin Machanavajjhala.
\newblock Pufferfish: {A} framework for mathematical privacy definitions.
\newblock {\em ACM Transactions on Database Systems}, 39(1):3:1--3:36, 2014.

\bibitem{li2010optimizing}
Chao Li, Michael Hay, Vibhor Rastogi, Gerome Miklau, and Andrew McGregor.
\newblock Optimizing linear counting queries under differential privacy.
\newblock In {\em Proceedings of the 29th ACM SIGMOD-SIGACT-SIGART Symposium on
  Principles of Database Systems (PODS)}, pages 123--134, 2010.

\bibitem{li2012adaptive}
Chao Li and Gerome Miklau.
\newblock An adaptive mechanism for accurate query answering under differential
  privacy.
\newblock {\em PVLDB}, 5(6):514--525, 2012.

\bibitem{li2013optimal}
Chao Li and Gerome Miklau.
\newblock Optimal error of query sets under the differentially-private matrix
  mechanism.
\newblock In {\em Proceedings of the 16th International Conference on Database
  Theory (ICDT)}, pages 272--283, 2013.

\bibitem{li2013membership}
Ninghui Li, Wahbeh Qardaji, Dong Su, Yi~Wu, and Weining Yang.
\newblock Membership privacy: a unifying framework for privacy definitions.
\newblock In {\em Proceedings of the 2013 ACM Conference on Computer and
  Communications Security (CCS)}, pages 889--900, 2013.

\bibitem{manku1998approximate}
Gurmeet~Singh Manku, Sridhar Rajagopalan, and Bruce~G Lindsay.
\newblock Approximate medians and other quantiles in one pass and with limited
  memory.
\newblock {\em ACM SIGMOD Record}, 27(2):426--435, 1998.

\bibitem{Mironov2017renyi}
Ilya Mironov.
\newblock R{\'{e}}nyi differential privacy.
\newblock In {\em Proceedings of the 30th {IEEE} Computer Security Foundations
  Symposium (CSF)}, pages 263--275, 2017.

\bibitem{murakami2019utility}
Takao Murakami and Yusuke Kawamoto.
\newblock Utility-optimized local differential privacy mechanisms for
  distribution estimation.
\newblock In {\em 28th $\{$USENIX$\}$ Security Symposium ($\{$USENIX$\}$
  Security 19)}, pages 1877--1894, 2019.

\bibitem{highdimldp18}
Xuebin Ren, Chia{-}Mu Yu, Weiren Yu, Shusen Yang, Xinyu Yang, Julie~A. McCann,
  and Philip~S. Yu.
\newblock Lopub: High-dimensional crowdsourced data publication with local
  differential privacy.
\newblock {\em {IEEE} Trans. Information Forensics and Security},
  13(9):2151--2166, 2018.

\bibitem{url:apple}
Apple Differential~Privacy Team.
\newblock Learning with privacy at scale.
\newblock {\em Apple Machine Learning J.}, 2017.

\bibitem{tramer2015differential}
Florian Tram{\`{e}}r, Zhicong Huang, Jean{-}Pierre Hubaux, and Erman Ayday.
\newblock Differential privacy with bounded priors: Reconciling utility and
  privacy in genome-wide association studies.
\newblock In {\em Proceedings of the 2015 ACM Conference on Computer and
  Communications Security (CCS)}, pages 1286--1297, 2015.

\bibitem{wang2017locally}
Tianhao Wang, Jeremiah Blocki, Ninghui Li, and Somesh Jha.
\newblock Locally differentially private protocols for frequency estimation.
\newblock In {\em Proceedings of the 26th {USENIX} Security Symposium}, pages
  729--745, 2017.

\bibitem{wang2019answering}
Tianhao Wang, Bolin Ding, Jingren Zhou, Cheng Hong, Zhicong Huang, Ninghui Li,
  and Somesh Jha.
\newblock Answering multi-dimensional analytical queries under local
  differential privacy.
\newblock In {\em Proceedings of the 2019 International Conference on
  Management of Data (SIGMOD)}, pages 159--176, 2019.

\bibitem{wang2019consistent}
Tianhao Wang, Zitao Li, Ninghui Li, Milan Lopuha{\"{a}}{-}Zwakenberg, and Boris
  Skoric.
\newblock Consistent and accurate frequency oracles under local differential
  privacy.
\newblock {\em CoRR}, abs/1905.08320, 2019.

\bibitem{WangLF2016onaverage}
Yu{-}Xiang Wang, Jing Lei, and Stephen~E. Fienberg.
\newblock On-average kl-privacy and its equivalence to generalization for
  max-entropy mechanisms.
\newblock In {\em Proceedings of the 2016 International Conference on Privacy
  in Statistical Databases (PSD)}, pages 121--134, 2016.

\bibitem{XiaoWG2010wavelet}
Xiaokui Xiao, Guozhang Wang, and Johannes Gehrke.
\newblock Differential privacy via wavelet transforms.
\newblock In {\em Proceedings of the 26th International Conference on Data
  Engineering (ICDE)}, pages 225--236, 2010.

\bibitem{xu2019dpsaas}
Min Xu, Tianhao Wang, Bolin Ding, Jingren Zhou, Cheng Hong, and Zhicong Huang.
\newblock Dpsaas: Multi-dimensional data sharing and analytics as services
  under local differential privacy.
\newblock {\em {PVLDB}}, 12(12):1862--1865, 2019.

\end{thebibliography}


\appendix

\section{Missing Proofs}
\label{app:missing_proofs}

\subsection{Proof for \cprop\ref{prop:metric_space}}
\label{prop:metric_space:proof}
Clearly $E(x,x)=0$ and $E(x,x')=E(x',x)$. Now we show the triangle inequality. We want to show that, if for some $x,y,z\in \mathcal{X}$, $E(x,z) > E(x,y)+E(y,z)$, then $E$ is not tight. By definition
\[
\frac{\Pr[\mathcal{A}(x)\in S]}{\Pr[\mathcal{A}(z)\in S]}=\frac{\Pr[\mathcal{A}(x)\in S]}{\Pr[\mathcal{A}(y)\in S]}\cdot \frac{\Pr[\mathcal{A}(y)\in S]}{\Pr[\mathcal{A}(z)\in S]}\leq 
e^{E(x,y)+E(y,z)}<e^{E(x,z)}
\]
which shows $E$ is not tight for $\ldpalgo$.
Therefore for any tight policy $E$, we have
$
E(x,z)\leq E(x,y)+E(y,z)
$.
\eop

\subsection{Proof for \cprop\ref{prop:mmerror}}
\label{prop:mmerror:proof}
    $\epinline{\estimate(\ldpdata)} = \epinline{\mathbf{B} \cdot \sum_{i=1}^n \mathbf{r}_i}=\epinline{\mathbf{B} \cdot \sum_{i=1}^n (\mathbf{A} \cdot \mathbf{h}_x + \lap({\bf s}))}=\mathbf{W}\sum_{i=1}^n \mathbf{h}_x + \epinline{\mathbf{B} \cdot \sum_{i=1}^n \lap({\bf s})}={\bf W} \cdot {\bf c}$, therefore the estimation is unbiased.
    
    To show the variance of the estimation, we have
    $
    \epinline{\|\estimate(\ldpdata) - {\bf W}\cdot{\bf c}\|^2}=
    \epinline{\|\mathbf{B} \cdot \sum_{i=1}^n \lap({\bf s})\|^2}=
    \trace[\mathbf{B} \cdot \vrinline{\sum_{i=1}^n \lap({\bf s})} \cdot \mathbf{B}^\intercal]=
    2n \cdot \trace[\mathbf{B}^\intercal\mathbf{B} \cdot \diag(s_1^2, \ldots, s_p^2)]
    $. The last equality is due to the property of Laplace distribution,  $\vrinline{\lap({\bf s})}=2\diag(s_1^2, \ldots, s_p^2)$.
\eop

\subsection{Proof for \cthm\ref{thm:mdrq_correctness}}
\label{app:mdrq_correctness_proof}
%
%
We will first show that the estimate for the frequency of all values among the $n$ users is unbiased
i.e., $\epinline{\hat{\counts}}=\counts$,  where $\hat{\mathbf{c}}=(\frac{e^{\eps}+1}{e^{\eps}-1})^D\mathbf{B}^{-1}_{m,D}\cdot \ob$, where $\ob$ is a vector of observations defined by Eqn.~\ref{eq:observation} and $\mathbf{B}^{-1}_{m,D}$ is a bias correction matrix defined by Eqn.~\ref{eq:recursion}. Since the answer for a range query is the sum of a set of unbiased estimates, it is also unbiased. 

Given the input $x_i$ held by user $i$, the encoding of the $d$-th dimension of $x_i$ before perturbation is a $m$-length vector, $\mathbf{b}_d\in \{-1,1\}^m$, where the first $(v-1)$ bits are -1 and the rest are 1 if $x_i[d]=v$. The perturbation algorithm flips these bits independently with probability $p=\frac{1}{1+e^{\eps}}$ and results a new vector $\mathbf{r}_d$, i.e., $\Pr[\mathbf{r}_d[k] \leftarrow \mathbf{b}_d[k]]=1-p$ and $\Pr[\mathbf{r}_d[k] \leftarrow (-1\cdot\mathbf{b}_d[k])]=p$, for every $k\in[m]$.
Then the report matrix $\mathbf{R}_i =\vect{ \mathbf{r}_1,\mathbf{r}_2,...,\mathbf{r}_D}^\intercal$ is sent by user $i$ to the data collector. The observation for $x\in [m]^D$ is $\obs_{x}= \sum_{i=1}^n \prod_{d=1}^D \mathbf{R}_i[d, x[d]]$.

Let's analyze the expected value for the contribution of each user $i$ in $\obs_x$, i.e., $\prod_{d=1}^D \mathbf{R}_i[d, x[d]]$. Before perturbation, the corresponding product is $\prod_{d=1}^D \mathbf{b}_d[x[d]]$. Note that if $x_i[d]>x[d]$, then $\mathbf{b}_d[x[d]]=-1$. Hence, this product depends only on the number of the coordinates of $x_i$ which have larger value than $x$, denoted by $S_{>}(x_i,x)=\sum_{d=1}^D \mathbbm{1}_{x_i[d]>x[d]}$. Hence, we have $\prod_{d=1}^D \mathbf{b}_d[x[d]]= (-1)^{S_{>}(x_i,x)}$.
Now consider how the above product changes after perturbation. 
Since each position of the vector $\mathbf{b}$ is $\{-1, 1\}$, flipping even number of the positions in $x[1],...,x[D]$ will not change the product. Hence, with probability $P_{even}=\sum\limits_{i\in[0,D], i\text{ even}}\binom{D}{i}p^i(1-p)^{D-i}$, we have 
\[\prod_{d=1}^D \mathbf{R}_i[d, x[d]]=\prod_{d=1}^D \mathbf{b}_d[x[d]] = (-1)^{S_{>}(x_i,x)};\]
and with probability $P_{odd}=\sum\limits_{i\in[0,D], i\text{ odd}}\binom{D}{i}p^i(1-p)^{D-i}$, we have 
\[\prod_{d=1}^D \mathbf{R}_i[d, x[d]]=-\prod_{d=1}^D \mathbf{b}_d[x[d]] = (-1)\cdot(-1)^{S_{>}(x_i,x)}.\]
Therefore, 
$\epinline{\prod_{d=1}^D \mathbf{R}_i[d, x[d]]} = 
(P_{even}-P_{odd}) (-1)^{S_{>}(x_i,x)} = (1-2p)^D (-1)^{S_{>}(x_i,x)}
$. The last equality is due to the fact that the expansion of $((1-p) -p)^{D}$ equals to $P_{even}-P_{odd}$.

When aggregating all users' contributions, we have
\[
\epinline{\ob_x}=\sum_{i=1}^n \epinline{\prod_{d=1}^D \mathbf{R}_i[d, x[d]]}= (1-2p)^D \sum_{i=1}^n (-1)^{S_{>}(x_i,x)}
=(1-2p)^D \sum_{x'=1}^{m^D} (-1)^{S_{>}(x',x)} c_{x'}
\]
The last equality is simply by changing iterating data owners' value to iterating values in the $D$-dim domain.
Thus, by the above derivation, we obtained 
a linear transformation from the frequency vector $\mathbf{c}=[c_1,...,c_{m^D}]$ to the observed vector $\ob=[\obs_1,...,\obs_{m^D}]$, characterized by a $m^D\times m^D$ matrix $\mathbf{A}_{m,D}=(1-2p)^D\mathbf{B}_{m,D}$, where $\mathbf{B}_{m,D}$ is a matrix whose elements are either $-1$ or $1$.

Therefore, the relation can be written as
$\epinline{\ob} =(1-2p)^D\mathbf{B}_{m,D}\cdot \mathbf{c}$ where $p=\frac{1}{1+e^{\eps}}$.
Together with \clem\ref{lem:inverse}, the estimate $\hat{\mathbf{c}}=(\frac{e^{\eps}+1}{e^{\eps}-1})^D\mathbf{B}^{-1}_{m,D}\cdot \ob$ for $\mathbf{c}$ is unbiased.
\eop

\begin{lemma}\label{lem:inverse}
    The matrix $\mathbf{B}^{-1}_{m,D}$ defined by \cequ\eqref{eq:recursion} in Section \ref{sec:mdrq:alg}
    is the matrix inverse of $\mathbf{B}_{m,D}$.
\end{lemma}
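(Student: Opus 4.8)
The plan is to prove the identity $\mathbf{B}_{m,D}\,\mathbf{B}_{m,D}^{-1}=\mathbf{I}_{m^D}$ by induction on $D$, exploiting the fact that both recursions \eqref{eq:B_recursion} and \eqref{eq:recursion} are nothing but Kronecker products with a common $m\times m$ building block. Concretely, writing $P:=\mathbf{B}_{m,1}$ and $Q:=\mathbf{B}_{m,1}^{-1}$ for the two base matrices, the block patterns show that $\mathbf{B}_{m,d}=P\otimes\mathbf{B}_{m,d-1}$ (block $(i,j)$ equals $\mathbf{B}_{m,d-1}$ when $i\ge j$ and $-\mathbf{B}_{m,d-1}$ when $i<j$, which is exactly $P_{ij}\mathbf{B}_{m,d-1}$), and the claimed inverse satisfies $\mathbf{B}_{m,d}^{-1}=Q\otimes\mathbf{B}_{m,d-1}^{-1}$. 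Unrolling, $\mathbf{B}_{m,D}=P^{\otimes D}$ and the candidate is $Q^{\otimes D}$, so the whole lemma collapses to (i) the base identity $PQ=\mathbf{I}_m$ and (ii) the mixed-product rule $(P\otimes X)(Q\otimes Y)=(PQ)\otimes(XY)$.

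First I would settle the base case $PQ=\mathbf{I}_m$ by a direct column computation. Here $P$ has entry $1$ on and below the diagonal and $-1$ strictly above it, so its $k$-th column is $[-1,\dots,-1,1,\dots,1]^\intercal$ with the sign change at row $k$. The matrix $Q=\tfrac12 M$ has $M$ equal to the identity plus a $-1$ subdiagonal plus a single $1$ in the top-right corner, so column $j<m$ of $M$ is $\mathbf{e}_j-\mathbf{e}_{j+1}$ and column $m$ is $\mathbf{e}_1+\mathbf{e}_m$. Then $P(\mathbf{e}_j-\mathbf{e}_{j+1})$ telescopes the two shifted sign vectors to the single spike $2\mathbf{e}_j$, while $P(\mathbf{e}_1+\mathbf{e}_m)$ adds the all-ones first column to the last column $[-1,\dots,-1,1]^\intercal$ to give $2\mathbf{e}_m$; hence $PM=2\mathbf{I}_m$ and $PQ=\mathbf{I}_m$.

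For the inductive step I would assume $\mathbf{B}_{m,d-1}\mathbf{B}_{m,d-1}^{-1}=\mathbf{I}_{m^{d-1}}$ and multiply the two block matrices directly. The $(I,J)$ block of $\mathbf{B}_{m,d}\mathbf{B}_{m,d}^{-1}$ is $\sum_K P_{IK}Q_{KJ}\,\mathbf{B}_{m,d-1}\mathbf{B}_{m,d-1}^{-1}$, where crucially the left factor always contributes $\mathbf{B}_{m,d-1}$ and the right factor $\mathbf{B}_{m,d-1}^{-1}$ in that order, so the induction hypothesis turns each summand's matrix part into $\mathbf{I}_{m^{d-1}}$ and pulls it out of the sum. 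What remains is the scalar $\big(\sum_K P_{IK}Q_{KJ}\big)\mathbf{I}_{m^{d-1}}=(PQ)_{IJ}\,\mathbf{I}_{m^{d-1}}=\delta_{IJ}\mathbf{I}_{m^{d-1}}$ by the base case, i.e.\ the product is $\mathbf{I}_{m^d}$. (Equivalently, this is just the mixed-product property $(P\otimes\mathbf{B}_{m,d-1})(Q\otimes\mathbf{B}_{m,d-1}^{-1})=(PQ)\otimes(\mathbf{B}_{m,d-1}\mathbf{B}_{m,d-1}^{-1})$.)

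The one place demanding care --- the ``main obstacle'' such as it is --- is the base case, because it is precisely the sparse wrap-around structure of $Q$ (the subdiagonal $-1$'s together with the lone top-right $1$) that inverts the dense triangular sign pattern of $P$; getting the boundary column $j=m$ and the scaling factor $\tfrac12$ right is what makes the telescoping close up to exactly $2\mathbf{I}_m$. Once the base identity is in hand, the block/Kronecker bookkeeping in the inductive step is routine, since every block product respects the order $\mathbf{B}_{m,d-1}\mathbf{B}_{m,d-1}^{-1}$ and so reduces to the scalar sign algebra already verified.
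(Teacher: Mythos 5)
Your proof is correct and takes essentially the same route as the paper's: the paper simply asserts that block-multiplying the two recursions gives $\mathbf{B}_{m,d}\mathbf{B}^{-1}_{m,d}=\mathbf{I}_{m^d}$ for each $1\leq d\leq D$ (``one can easily verify''), which is exactly your inductive step, with your Kronecker-product packaging $\mathbf{B}_{m,D}=P^{\otimes D}$, $\mathbf{B}^{-1}_{m,D}=Q^{\otimes D}$ and the mixed-product rule serving as a clean reorganization of that same block computation. Your explicit column-by-column verification of the base case $PQ=\mathbf{I}_m$ (the telescoping $P(\mathbf{e}_j-\mathbf{e}_{j+1})=2\mathbf{e}_j$ and the wrap-around column $P(\mathbf{e}_1+\mathbf{e}_m)=2\mathbf{e}_m$) correctly supplies the detail the paper leaves implicit.
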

\begin{proof}
It can be observed that $\mathbf{B}_{m,D}$ can be partitioned into $m\times m$ submatrices $\mathbf{B}_{m,D-1}$, satisfying the following recursive relation for $2\leq d \leq D$,
\begin{equation*}
\mathbf{B}_{m,d}=
\begin{bmatrix}
\mathbf{B}_{m,d-1} & -\mathbf{B}_{m,d-1} & \cdots  & -\mathbf{B}_{m,d-1}\\
\vdots & \ddots & \ddots & \vdots\\
\vdots &  & \ddots & -\mathbf{B}_{m,d-1}\\
\mathbf{B}_{m,d-1} & \cdots & \cdots & \mathbf{B}_{m,d-1}\\
\end{bmatrix}
\end{equation*}
That is, after partition, the submatrices in the bottom-left triangle are all $\mathbf{B}_{m,d-1}$ and rest of the submatrices are all $-\mathbf{B}_{m,d-1}$. For the base case when $D=1$,
\begin{equation*}
\mathbf{B}_{m,1}=
\begin{bmatrix}
1 & -1 & \cdots  & -1\\
\vdots & \ddots & \ddots & \vdots\\
\vdots &  & \ddots & -1\\
1 & \cdots & \cdots & 1\\
\end{bmatrix}
\end{equation*}

Recall that $\mathbf{B}^{-1}_{m,D}$ is defined by the following recursive relation in Eqn. \ref{eq:recursion}: for $2\leq d \leq D$,
\begin{equation*}
\mathbf{B}_{m,d}^{-1}=
\frac{1}{2}
\begin{bmatrix}
\mathbf{B}_{m,d-1}^{-1} & 0 & \cdots  & 0 & \mathbf{B}_{m,d-1}^{-1}\\
-\mathbf{B}_{m,d-1}^{-1} & \mathbf{B}_{m,d-1}^{-1} & \ddots & \vdots & 0\\
0 & -\mathbf{B}_{m,d-1}^{-1} & \ddots & 0 & \vdots\\
\vdots & \ddots & \ddots & \mathbf{B}_{m,d-1}^{-1} & 0\\
0 & \dots & 0 & -\mathbf{B}_{m,d-1}^{-1} & \mathbf{B}_{m,d-1}^{-1}\\
\end{bmatrix}
\end{equation*}
where $\mathbf{B}_{m,d-1}^{-1}$ is a $m^{d-1}\times m^{d-1}$ submatrix. For the base case, $\mathbf{B}_{m,1}^{-1}$ is the following $m\times m$ matrix
\[
\mathbf{B}_{m,1}^{-1}=
\frac{1}{2}
\begin{bmatrix}
1 & 0 & \cdots  & 0 & 1\\
-1 & 1 & \ddots & \vdots & 0\\
0 & -1 & \ddots & 0 & \vdots\\
\vdots & \ddots & \ddots & 1 & 0\\
0 & \dots & 0 & -1 & 1\\
\end{bmatrix}
\]

One can easily verify that $\mathbf{B}_{m,d}\mathbf{B}^{-1}_{m,d}=\diag(\mathbf{I}_{m^{d-1}},...,\mathbf{I}_{m^{d-1}})=\mathbf{I}_{m^{d}}$ for any $1\leq d \leq D$, and thus 
$\mathbf{B}^{-1}_{m,D}$ defined by Eqn. \ref{eq:recursion} is the matrix inverse of $\mathbf{B}_{m,D}$.
\end{proof}

\subsection{Proof for \cthm\ref{thm:mdrq_variance}}
\label{app:mdrq_variance_proof}
First we claim that each row of the matrix $\mathbf{B}^{-1}_{m,D}$ has $2^D$ elements, and each element is either $-(\frac{1}{2})^D$ or $(\frac{1}{2})^D$.
We can prove the claim by induction. 
    
First consider the base case when $D=1$,
\begin{equation*}
\mathbf{B}_{m,1}^{-1}=
\frac{1}{2}
\begin{bmatrix}
1 & 0 & \cdots  & 0 & 1\\
-1 & 1 & \ddots & \vdots & 0\\
0 & -1 & \ddots & 0 & \vdots\\
\vdots & \ddots & \ddots & 1 & 0\\
0 & \dots & 0 & -1 & 1\\
\end{bmatrix}
\end{equation*}
Thus the claim is true for $D=1$.

Suppose the claim is true for $D=k$, namely 
each row of the matrix $\mathbf{B}^{-1}_{m,k}$ has $2^k$ elements, and each element is $-(\frac{1}{2})^k$ or $(\frac{1}{2})^k$. 
Now we consider the case when $D=k+1$.
According to Equation~\eqref{eq:recursion}, we have
\begin{equation*}
\mathbf{B}_{m,k+1}^{-1}=
\frac{1}{2}
\begin{bmatrix}
\mathbf{B}_{m,k}^{-1} & 0 & \cdots  & 0 & \mathbf{B}_{m,k}^{-1}\\
-\mathbf{B}_{m,k}^{-1} & \mathbf{B}_{m,k}^{-1} & \ddots & \vdots & 0\\
0 & -\mathbf{B}_{m,k}^{-1} & \ddots & 0 & \vdots\\
\vdots & \ddots & \ddots & \mathbf{B}_{m,k}^{-1} & 0\\
0 & \dots & 0 & -\mathbf{B}_{m,k}^{-1} & \mathbf{B}_{m,k}^{-1}\\
\end{bmatrix}
\end{equation*}
Notice that each row of the matrix $\mathbf{B}_{m,k+1}^{-1}$ consists values from the rows of the two submatrices $\mathbf{B}_{m,k}^{-1}$ and $-\mathbf{B}_{m,k}^{-1}$, 
which by induction assumption has $2^k$ elements from $\{-(\frac{1}{2})^k, (\frac{1}{2})^k\}$.
Thus, each row of $\mathbf{B}_{m,k+1}^{-1}$ has $2^{k+1}$ elements, and each element is $-(\frac{1}{2})^{k+1}$ or $(\frac{1}{2})^{k+1}$. Therefore, the claim is true for any $D$.

Now we can prove the theorem. 
According to Equation~\eqref{eq:freq_est},
$\hat{c}_x=(\frac{e^\eps+1}{e^\eps-1})^{D}(\mathbf{e}_x\cdot \mathbf{B}^{-1}_{m,D})\cdot\ob$, 
where $\mathbf{e}_x$ denotes a binary vector with $x$-th position being $1$, thus $\mathbf{e}_x\cdot \mathbf{B}^{-1}_{m,D}$ computes the $x$-th row of matrix $\mathbf{B}^{-1}_{m,D}$. Therefore the variance can be computed as
\[
\vrinline{\hat{c}_x}=(\frac{e^\eps+1}{e^\eps-1})^{2D}
\vrinline{(\mathbf{e}_x \mathbf{B}^{-1}_{m,D})\cdot \ob}
\]

With the claim that each row of the matrix $\mathbf{B}^{-1}_{m,D}$ has $2^D$ elements, and each element is either $-(\frac{1}{2})^D$ or $(\frac{1}{2})^D$, we obtain
\[
\vrinline{\hat{c}_x}\leq (\frac{e^\eps+1}{e^\eps-1})^{2D}\cdot 
2^D (\frac{1}{2})^{2D}\cdot \max_x(\vrinline{\obs_x})
=(\frac{e^\eps+1}{e^\eps-1})^{2D}\cdot 
2^{-D} \cdot \max_x(\vrinline{\obs_x})
\]

Recall that by Eqn.~\ref{eq:observation} and the proof of Theorem \ref{thm:mdrq_correctness},
\[
\obs_x = \sum_{i=1}^n \prod_{d=1}^D \rmatrix_i[d,x[d]]
\]
where 
$\prod_{d=1}^D \rmatrix_i[d,x[d]]= (-1)^{S_{>}(x_i,x)}$ with probability
$P_{even}(D)$, and
$\prod_{d=1}^D \rmatrix_i[d,x[d]]= (-1)\cdot(-1)^{S_{>}(x_i,x)}$ with probability
$P_{odd}(D)$.
Thus, 
$\epinline{\obs_x}^2=n(P_{odd}(D)-P_{even}(D))^2=n(1-2p)^{2D}=n(\frac{e^\eps-1}{e^\eps+1})^{2D}$, and 
$\epinline{\obs_x^2}=n(P_{odd}(D)+P_{even}(D))=n$. We can calculate the variance as
\[
\vrinline{\obs_x}=\epinline{\obs_x^2}-\epinline{\obs_x}^2=(1-(\frac{e^\eps-1}{e^\eps+1})^{2D})n
\]
which leads to
\[
\vrinline{\hat{c}_x}= O\Big( (\frac{e^\eps+1}{e^\eps-1})^{2D}2^{-D}
(1-(\frac{e^\eps-1}{e^\eps+1})^{2D})n \Big).
\]
\eop

\subsection{Proof for \cthm\ref{thm:mdrq_range_variance}}
\label{app:mdrq_range_variance_proof}
According to Equation~\eqref{eq:range_freq_est}, we have 
$\hat{c}(R)=\sum_{x\in R}\hat{c}_x=
(\frac{e^\eps+1}{e^\eps-1})^{D}\sum_{x\in R}((\mathbf{e}_x \mathbf{B}^{-1}_{m,D})\cdot\ob)
=(\frac{e^\eps+1}{e^\eps-1})^{D}(\sum_{x\in R}(\mathbf{e}_x \mathbf{B}^{-1}_{m,D}))\cdot\ob$, where $\mathbf{e}_x \mathbf{B}^{-1}_{m,D}$ is the $x$-th row of matrix $\mathbf{B}^{-1}_{m,D}$.
We first examine the non-zero entries of
$\sum_{x\in R}(\mathbf{e}_x \mathbf{B}^{-1}_{m,D})$, which we denote as $\mathcal{N}(\sum_{x\in R}(\mathbf{e}_x \mathbf{B}^{-1}_{m,D}))$.

Recall that by Equation~\eqref{eq:recursion},
\begin{equation*}
\mathbf{B}_{m,D}^{-1}=
\frac{1}{2}
\begin{bmatrix}
\mathbf{B}_{m,D-1}^{-1} & 0 & \cdots  & 0 & \mathbf{B}_{m,D-1}^{-1}\\
-\mathbf{B}_{m,D-1}^{-1} & \mathbf{B}_{m,D-1}^{-1} & \ddots & \vdots & 0\\
0 & -\mathbf{B}_{m,D-1}^{-1} & \ddots & 0 & \vdots\\
\vdots & \ddots & \ddots & \mathbf{B}_{m,D-1}^{-1} & 0\\
0 & \dots & 0 & -\mathbf{B}_{m,D-1}^{-1} & \mathbf{B}_{m,D-1}^{-1}\\
\end{bmatrix}
\end{equation*}

Let $[\mathbf{B}_{m,D}^{-1}]_k$ denote the $k$-th row of $\mathbf{B}_{m,D}^{-1}$ after partitioning into submatrices. For instance,
$[\mathbf{B}_{m,D}^{-1}]_1=[\mathbf{B}_{m,D-1}^{-1}, 0, ..., 0, \mathbf{B}_{m,D-1}^{-1}]$,
$[\mathbf{B}_{m,D}^{-1}]_2=[-\mathbf{B}_{m,D-1}^{-1}, \mathbf{B}_{m,D-1}^{-1},  0, ..., 0]$.
Then
\[
\sum_{x\in R}(\mathbf{e}_x \mathbf{B}^{-1}_{m,D})=
\sum_{x'\in R/[l_D,r_D]} \mathbf{e}_{x'} \sum_{k=l_D}^{r_D}[\mathbf{B}^{-1}_{m,D}]_k
\]
where $R/[l_D,r_D]=[l_1,r_1]\times\cdots\times[l_{D-1},r_{D-1}]$.
Notice that $\sum_{k=l_D}^{r_D}[\mathbf{B}^{-1}_{m,D}]_k$ has only $1$ or $2$ submatrices remaining, since many submatrices will be canceled after summation.
More specifically, 
\[
\sum_{k=1}^{t}[\mathbf{B}^{-1}_{m,D}]_k=\frac{1}{2}[\underbrace{0,...,0}_{t-1}, \mathbf{B}^{-1}_{m,D-1}, 0, ...,0, \mathbf{B}^{-1}_{m,D-1}], \quad t=1, ..., m-1,
\]
\[
\sum_{k=1}^{m}[\mathbf{B}^{-1}_{m,D}]_k=\frac{1}{2}[0,...,0, 2\mathbf{B}^{-1}_{m,D-1}], \quad \hbox{and}
\]
\[
\sum_{k=s}^{t}[\mathbf{B}^{-1}_{m,D}]_k=\frac{1}{2}[\underbrace{0,...,0}_{s-2}, -\mathbf{B}^{-1}_{m,D-1}, 0, ...,0, \mathbf{B}^{-1}_{m,D-1},  \underbrace{0,...,0}_{m-t}], \quad 2\leq s\leq t\leq m.
\]
Thus $\sum_{k=l_D}^{r_D}[\mathbf{B}^{-1}_{m,D}]_k$ only consists submatrices $\mathbf{B}^{-1}_{m,D-1}$ or $-\mathbf{B}^{-1}_{m,D-1}$, and the non-zero entries of 
\[
\sum_{x\in R}(\mathbf{e}_x \mathbf{B}^{-1}_{m,D})= \sum_{x'\in R/[l_D,r_D]} \mathbf{e}_{x'}\sum_{k=l_D}^{r_D}[\mathbf{B}^{-1}_{m,D}]_k
\]
can be written as (denoted by $\mathcal{N}(\cdot)$):
\begin{align*}
&
\mathcal{N}(\sum_{x\in R}(\mathbf{e}_x \mathbf{B}^{-1}_{m,D}) =
\\
& \!\!\!\!\!\!
\begin{dcases*}
\mathcal{N}(\sum_{x'\in R/[l_D,r_D]} \mathbf{e}_{x'}\mathbf{B}^{-1}_{m,D-1} ) & if $l_D=1$ and $r_D=m$\\
\frac{1}{2}[\mathcal{N}(\sum_{x'\in R/[l_D,r_D]} \mathbf{e}_{x'}\mathbf{B}^{-1}_{m,D-1} ), \mathcal{N}(\sum_{x'\in R/[l_D,r_D]} \mathbf{e}_{x'}\mathbf{B}^{-1}_{m,D-1} )] & if $l_D=1$ and $r_D\leq m-1$ \\
\frac{1}{2}[-\mathcal{N}(\sum_{x'\in R/[l_D,r_D]} \mathbf{e}_{x'}\mathbf{B}^{-1}_{m,D-1} ), \mathcal{N}(\sum_{x'\in R/[l_D,r_D]} \mathbf{e}_{x'}\mathbf{B}^{-1}_{m,D-1} )] & otherwise
\end{dcases*}.
\end{align*}

By induction, we can obtain the expression for $\mathcal{N}(\sum_{x\in R}\mathbf{e}_x\mathbf{B}^{-1}_{m,D})$ when $R$ is given.
Notice that if a dimension $d$ is trivial, i.e., $[l_d,r_d]=[1,m]$, then the non-zero entries remain the same during the recursion form dimension $d-1$ to $d$.
For the base case, we have
\begin{equation*}
\mathbf{B}_{m,1}^{-1}=
\frac{1}{2}
\begin{bmatrix}
1 & 0 & \cdots  & 0 & 1\\
-1 & 1 & \ddots & \vdots & 0\\
0 & -1 & \ddots & 0 & \vdots\\
\vdots & \ddots & \ddots & 1 & 0\\
0 & \dots & 0 & -1 & 1\\
\end{bmatrix}
\end{equation*}
Thus, $\mathcal{N}(\sum_{x\in [l_1,r_1]}\mathbf{e}_x\mathbf{B}^{-1}_{m,1})=[1]$, $\frac{1}{2}[1~1]$ or $\frac{1}{2}[-1~1]$.
By induction, we can easily observe that there are $2^{D_R}$ non-zero entries in $\sum_{x\in R}\mathbf{e}_x\mathbf{B}^{-1}_{m,D}$, and each entry is either $2^{-D_R}$ or $-2^{-D_R}$ where $D_R$ is the dimension of $R$.
This is true for the base case, and remains true during the recursive relation above.

Therefore, 
by $\hat{c}(R)=(\frac{e^\eps+1}{e^\eps-1})^{D}(\sum_{x\in R}\mathbf{e}_x\mathbf{B}^{-1}_{m,D})\cdot\ob$, we have
\[
\vrinline{\hat{c}(R)}\leq (\frac{e^\eps+1}{e^\eps-1})^{2D}2^{D_R}(2^{-D_R})^2\max_{x}(\vrinline{\obs_x})
=(\frac{e^\eps+1}{e^\eps-1})^{2D}2^{-D_R}\max_{x}(\vrinline{\obs_x})
\]

Together with the conclusion from the proof of Theorem \ref{thm:mdrq_variance} that $\vrinline{\obs_x} = \bigohinline{(1-(\frac{e^\eps-1}{e^\eps+1})^{2D})n}$, we have
\[
\vrinline{\hat{c}(R)}= \bigoh{(\frac{e^\eps+1}{e^\eps-1})^{2D}2^{-D_R}
(1-(\frac{e^\eps-1}{e^\eps+1})^{2D})n}.
\]
\eop




\subsection{Proof for \cthm\ref{thm:mdrq:weighted}}
\label{app:thm:mdrq:weighted}
When the weights are non-private, as mentioned we can estimate the range query as $\hat{c}_{\bf w}(R)=\sum_{w\in W}w\cdot \hat{c}_{g_w}(R)$. Therefore, the expected squared error is
\begin{equation*}
\begin{aligned}
    \epinline{\| \hat{c}_{\bf w}(R)-c_{\bf w}(R) \|^2} = & \vrinline{\hat{c}_{\bf w}(R)} = \sum_{w \in W} w^2\cdot \vrinline{\hat{c}_{g_w}(R)}
    \\ 
    \leq & \Delta^2 \sum_{w \in W} (\frac{e^\eps+1}{e^\eps-1})^{2D}2^{-D_R} (1-(\frac{e^\eps-1}{e^\eps+1})^{2D}) \cdot |g_w|
    \\ 
    = & \bigoh{\frac{e^\eps+1}{e^\eps-1})^{2D}2^{-D_R} (1-(\frac{e^\eps-1}{e^\eps+1})^{2D})\Delta^2 n}
\end{aligned}
\end{equation*}

When the weights are private, we estimate ${c}_{\bf w}(R)$ as $\hat{c}_{\bf w}(R)=\Delta\cdot \hat{c}(R\times[2,2])$, where the $(D+1)$-th dimension is the weight after randomized rounding (to $\{1, 2\}$). The squared error is
\begin{equation*}
\begin{aligned}
    \epinline{\| \hat{c}_{\bf w}(R)-c_{\bf w}(R) \|^2} 
    = &\vrinline{\hat{c}_{\bf w}(R)} \\
    = &\vrinline{\Delta\cdot \hat{c}(R\times[2,2])} \\
    = &\Delta^2\cdot \vrinline{\hat{c}(R\times[2,2])}
\end{aligned}
\end{equation*}
Now consider $\vrinline{\hat{c}(R\times[2,2])}$. We will follow similar arguments as in Appendix \ref{app:mdrq_range_variance_proof}.
Given a private value $x_i$ and weight $w_i$ held by user $i$, the encoding of the $d$-th dimension of $x_i$ before perturbation is $\mathbf{b}_d\in \{-1,1\}^m$, where the first $(v-1)$ bits are $-1$ and the rest are $1$ if $x_i[d]=v$. 
Let $w_i'\in \{1,2\}$ be the weight value after rounding. The encoding of the $(D+1)$-th dimension is a length-$2$ vector $\mathbf{b}_{D+1}=[1~1]$ if $w_i'=0$ and $\mathbf{b}_{D+1}=[-1~1]$ if $w_i'=1$.
Hence, we have $\prod_{d=1}^{D+1} \mathbf{b}_d[x^+[d]]= (-1)^{S_{>}(x_i,x)}$ for a $(D+1)$-dim value $x^+ = [x~2]$, since $\mathbf{b}_{D+1}[2]=1$. Then after perturbation, we have $\prod_{d=1}^{D+1} \rmatrix_i[d,x^+[d]]= (-1)^{S_{>}(x_i,x)}$ with probability
\[
P_{even}(D+1)=\sum\limits_{i\in[0,D+1], i\text{ is even}}\binom{D+1}{i}p^i(1-p)^{D+1-i}
\] 
where $p=\frac{1}{e^{\eps}+1}$, and $\prod_{d=1}^{D+1} \rmatrix_i[d,x^+[d]] = (-1)\cdot(-1)^{S_{>}(x_i,x)}$ with probability
\[
P_{odd}(D+1)=\sum\limits_{i\in[0,D+1], i\text{ is odd}}\binom{D+1}{i}p^i(1-p)^{D+1-i}.
\]
According to $\obs_x = \sum_{i=1}^n \prod_{d=1}^{D+1} \rmatrix_i[d,x[d]]$, consider the $(D+1)$-dim value $[x~2]$, we have
\[
\begin{aligned}
\vrinline{\obs_{[x~2]}} & = \epinline{\obs_{[x~2]}^2}-\epinline{\obs_{[x~2]}}^2
\\
& = n(P_{odd}(D+1)+P_{even}(D+1))-n(P_{odd}(D+1)-P_{even}(D+1))^2
\\
& = (1-(\frac{e^\eps-1}{e^\eps+1})^{2(D+1)})n.
\end{aligned}
\]

The rest of the proof follows the proof for Theorem \ref{thm:mdrq_range_variance}, by changing dimension $D$ to $D+1$ and $D_R$ to $D_R+1$ due to the private weight dimension.
Thus,
\[
\begin{aligned}
\vrinline{\hat{c}(R\times[2,2])} & = (\frac{e^\eps+1}{e^\eps-1})^{2(D+1)}2^{-(D_R+1)} \cdot \max_{x\in R}(\vrinline{\obs_{[x~2]}})
\\
& = \bigoh{(\frac{e^\eps+1}{e^\eps-1})^{2(D+1)}2^{-(D_R+1)} (1-(\frac{e^\eps-1}{e^\eps+1})^{2(D+1)})n},
\end{aligned}
\]
and
\[
\begin{aligned}
\epinline{\| \hat{c}_{\bf w}(R)-c_{\bf w}(R) \|^2} & = \Delta^2 \cdot \vrinline{\hat{c}(R\times[2,2])}
\\
& = \bigoh{(\frac{e^\eps+1}{e^\eps-1})^{2(D+1)}2^{-(D_R+1)} (1-(\frac{e^\eps-1}{e^\eps+1})^{2(D+1)})\Delta^2 n}.
\end{aligned}
\]
\eop

\subsection{Proof for \clem\ref{lem:qq}}
\label{app:lem:qq}
    Recall that by Equation~\eqref{eq:range_freq_est}, our one-dimensional range query is estimated as $\hat{c}(R)=\hat{c}([1,r])=
    \frac{e^\eps+1}{e^\eps-1}\sum_{x\in [1,r]}\mathbf{e}_x \mathbf{B}^{-1}_{m,1}\cdot \ob= \frac{e^\eps+1}{e^\eps-1}(\obs_1+\obs_r)/2$, where 
    $\obs_{x}= \sum_{i=1}^n \prod_{d=1}^D \mathbf{R}_i[d, x[d]], \forall x\in[m]^D$. Since $\obs_x$ is the summation of $n$ independent random variables $\prod_{d=1}^D \mathbf{R}_i[d, x[d]]\in \{-1,1\}$ 
    , by Chernoff-Hoeffding bound [\cite{boucheron2013concentration}, Theorem 2.8],
    \[
        \pr{|\obs_x-\ep{\obs_x}| \geq t} \leq e^{-\frac{t^2}{2n}}
    \]
    Since $\hat{\sigma}(r)= \frac{e^\eps+1}{e^\eps-1}\cdot\frac{\obs_{1}+\obs_{r}}{2n}$, we have
    \[
        \pr{|\hat{\sigma}(r)-\mathrm{E}(\hat{\sigma}(r))| \geq \frac{e^\eps+1}{e^\eps-1}\cdot\frac{t}{n}} \leq e^{-\frac{t^2}{2n}}
    \]
    Let $\delta=e^{-\frac{t^2}{2n}}$, then $t=\sqrt{2n\log \frac{1}{\delta}}$. Thus, for each frequency estimation for range $[1,r]$, with probability $\geq 1-\delta$, we have $|\hat{\sigma}(r)-\sigma(r)|\leq \frac{e^\eps+1}{e^\eps-1}\cdot \sqrt{\frac{2}{n}\log \frac{1}{\delta}}$.
\eop

\subsection{Proof for \cthm\ref{thm:qq}}
\label{app:thm:qq}
    Let $\Delta=\frac{e^\eps+1}{e^\eps-1}\cdot \sqrt{\frac{2}{n}\log \frac{2\log m}{\delta}}$. We want to show that, w.h.p., $\Err[\hat x^*] \leq 2\Delta$.
    
    By Lemma \ref{lem:qq}, for a particular value $x$, with probability at least $1-\frac{\delta}{2\log m}$, $|\hat{\sigma}(x)-\sigma(x)|\leq \Delta$.
    
    We first want to show that the binary search procedure runs in the {\em desired way} (w.h.p.), which needs to have $\hat\sigma(L) \leq \hat\sigma(R)$ (while $\sigma(L) \leq \sigma(R)$ by definitions).
    As $L$ and $R$ are getting closer to each other, if in any iteration of the algorithm, we have $|\sigma(L) - \sigma(R)| \leq 2\Delta$, then reporting any value between $L$ and $R$ as $\hat x^*$ has the error bounded in the required way.
    Before that happened, an loop invariant of the algorithm is (w.h.p.): 
    $
    \hat\sigma(L) \leq \sigma(L)+\Delta < p \leq \sigma(R)-\Delta \leq \hat\sigma(R),
    $
    and it relies on the events $|\hat\sigma(L) - \sigma(L)| \leq \Delta$ and $|\hat\sigma(R) - \sigma(R)| \leq \Delta$, each of which holds with probability at least $1-\delta/(2\log m)$. Since the binary search procedure touches at most $2\log m$ such $L$'s and $R$'s, it runs in the desired way with probability at least $1-\delta$ (when all these events hold).

    Let $V$ be the multiset of values that are queried by the one-dimensional range query oracle in our mechanism. Since we are using binary search, $|V| \leq 2\log m$.
    For each $x \in V$, with probability at least $1-\frac{\delta}{2\log m}$, $|\hat{\sigma}(x)-\sigma(x)|\leq \Delta$. Thus, with probability at least $1-\delta$, we have $|\hat{\sigma}(x)-\sigma(x)|\leq \Delta$ for all $x \in V$.
    As the aforementioned $L$'s and $R$'s are also in the set $V$, the event that the binary search procedure runs in the desired way holds at the same time. $\hat x^*$ must come from $V$. We can make the following argument, w.h.p.:
    for any $x \in V$ s.t. $\sigma(x) + \Delta < p$, we have $\hat{\sigma}(x) \leq \sigma(x) + \Delta < p$, and thus $x$ will not be chosen by our mechanism; for any $x \in V$ s.t. $p \leq \sigma(x-1) - \Delta$, we have $p \leq \sigma(x-1) - \Delta \leq \hat{\sigma}(x-1)$, and thus $x$ will not be chosen by our mechanism, either.
    %
    %
    %
    Therefore, with probability at least $1-\delta$, the error of the chosen $\hat x^*$ as an estimated $p$-quantile can be bounded as
    \begin{equation*}
    \begin{aligned}
        \Err[\hat{x}^*]=&\inf_{\hat{p} \in I({\hat{x}^*})} |\hat{p} - p|   \\
        \leq &\min(|{\sigma}(\hat{x}^*-1)-p|, |{\sigma}(\hat{x}^*)-p|) \\
        \leq & \min(\Delta, \Delta) \leq 2\Delta.
    \end{aligned}
    \end{equation*}
\eop

\section{Sequential Composability of E-LDP}
\label{app:composability}
Similar to DP and LDP, $E$-LDP also has the following sequential composability.
\begin{proposition}[Composition]\label{prop:composition}
Suppose a randomized algorithm $\ldpalgo_1: \domain \rightarrow \domainoutput_1$ satisfies $E_1$-LDP, and $\ldpalgo_2: \domain \rightarrow \domainoutput_2$ satisfies $E_2$-LDP. If $\ldpalgo_1, \ldpalgo_2$ have independent source of randomness, then the composition of $\ldpalgo_1, \ldpalgo_2$, defined to be $\ldpalgo_3: \domain \rightarrow \domainoutput_1\times\domainoutput_2$ by the mapping $\ldpalgo_3(x)=(\ldpalgo_1(x), \ldpalgo_2(x))$ satisfies $E_3$-LDP, where $E_3(x,x')=E_1(x,x')+E_2(x,x')$ for any pair of $x,x'\in \domain$.
\end{proposition}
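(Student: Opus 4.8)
The plan is to reduce the subset-based privacy definition (\cdef\ref{def:blowfishldp}) to a pointwise likelihood-ratio statement, exploit the independence of the two randomness sources to factorize that ratio, apply the two privacy guarantees to the individual factors, and then recombine and lift the bound back to arbitrary output subsets.

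First I would fix an arbitrary pair $x, x' \in \domain$ and argue at the level of individual outputs. For a fixed output pair $(o_1, o_2) \in \domainoutput_1 \times \domainoutput_2$, the assumed independence of the randomness used by $\ldpalgo_1$ and $\ldpalgo_2$ gives the factorization
\[
\prinline{\ldpalgo_3(x) = (o_1, o_2)} = \prinline{\ldpalgo_1(x) = o_1} \cdot \prinline{\ldpalgo_2(x) = o_2},
\]
and the analogous identity with $x$ replaced by $x'$. (In the continuous case I would read these as joint densities factoring into a product of marginals, with the argument otherwise unchanged.)

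Next I would invoke the hypotheses that $\ldpalgo_1$ is $E_1$-LDP and $\ldpalgo_2$ is $E_2$-LDP at the pointwise level, which follows from \cdef\ref{def:blowfishldp} by specializing the defining subset to a singleton (or to a shrinking neighborhood in the continuous case). This yields
\[
\prinline{\ldpalgo_1(x) = o_1} \leq e^{E_1(x,x')} \prinline{\ldpalgo_1(x') = o_1}, \qquad \prinline{\ldpalgo_2(x) = o_2} \leq e^{E_2(x,x')} \prinline{\ldpalgo_2(x') = o_2}.
\]
Multiplying the two inequalities and substituting the factorization gives the pointwise bound
\[
\prinline{\ldpalgo_3(x) = (o_1, o_2)} \leq e^{E_1(x,x') + E_2(x,x')} \prinline{\ldpalgo_3(x') = (o_1, o_2)} = e^{E_3(x,x')} \prinline{\ldpalgo_3(x') = (o_1, o_2)},
\]
where the last equality is exactly the defining relation $E_3(x,x') = E_1(x,x') + E_2(x,x')$.

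Finally I would lift this pointwise estimate to an arbitrary measurable subset $S \subseteq \domainoutput_1 \times \domainoutput_2$ by summing (or integrating) over all $(o_1, o_2) \in S$; since the common factor $e^{E_3(x,x')}$ does not depend on the output, it pulls out of the sum and produces $\prinline{\ldpalgo_3(x) \in S} \leq e^{E_3(x,x')} \prinline{\ldpalgo_3(x') \in S}$, which is precisely the claim that $\ldpalgo_3$ is $E_3$-LDP. The bulk of the argument is a routine factor-by-factor multiplication; the only delicate point is the passage between the subset-based definition and the pointwise ratio, which is immediate in the discrete case and handled in the continuous case by a standard density / Radon--Nikodym argument. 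I therefore expect no substantive obstacle beyond this measure-theoretic bookkeeping.
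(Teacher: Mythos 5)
Your proposal is correct and follows essentially the same route as the paper's own proof: factor the pointwise likelihood ratio of $\ldpalgo_3$ into the product of the ratios for $\ldpalgo_1$ and $\ldpalgo_2$ via independence, bound each factor by $e^{E_1(x,x')}$ and $e^{E_2(x,x')}$, and combine. The only difference is that you spell out the measure-theoretic bookkeeping (singleton/density reduction and summing back over subsets $S$) that the paper leaves implicit, which is a harmless refinement rather than a new argument.
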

\begin{proof}
Consider any $x,x'\in \mathcal{X}$. For any $(y_1,y_2)\in \mathcal{Y}_1\times\mathcal{Y}_2$, we have
\[
\frac{\Pr[\mathcal{A}_3(x)=(y_1,y_2)]}{\Pr[\mathcal{A}_3(x')=(y_1,y_2)]}=
\frac{\Pr[\mathcal{A}_1(x)=y_1]\Pr[\mathcal{A}_2(x)=y_2]}{\Pr[\mathcal{A}_1(x')=y_1]\Pr[\mathcal{A}_2(x')=y_2]}\leq e^{E_1(x,x')}e^{E_2(x,x')}=e^{E_3(x,x')}.
\]
\end{proof}

\end{document}